\newtheorem{theorem}{Theorem}[section]
\newtheorem{definition}{Definition}
\newtheorem{lemma}[theorem]{Lemma}
\newtheorem{corollary}[theorem]{Corollary}
\newtheorem{example}[theorem]{Example}
\newtheorem{remark}[theorem]{Remark}
\newcommand{\altp}{\pi}
\newcommand{\braessp}{p}
\newcommand{\braessq}{q}
\newcommand{\braessr}{r}
\newcommand{\xvec}{x}
\newcommand{\zvec}{z}
\newcommand{\std}{ \sigma }
\newcommand{\var}{ v }
\newcommand{\kappavar}{\kappa}
\newcommand{\kappastdev}{ \kappa_{\std} }
\newcommand{\route}{p}
\newcommand{\paths}{\mathcal{P}}
\newcommand{\flow}{f}
\newcommand{\R}{\mathbb{R}}
\newcommand{\pathcost}{Q}
\newcommand{\xx}{x}   
\newcommand{\zz}{z}   
\newcommand{\CC}{C} 
\newcommand{\GV}{V}
\newcommand{\GA}{E}
\definecolor{myred}{rgb}{0.95, 0.01, 0.01}
\definecolor{myblue}{rgb}{0.01, 0.01, 0.9}
\title{The Burden of Risk Aversion in Mean-Risk Selfish Routing}
\author[E. Nikolova]{E. Nikolova${}^*$}
\author[N.E. Stier-Moses]{N.E. Stier-Moses${}^{**}$}
\dedicatory{
  ${}^*$
 Department of Electrical and Computer Engineering, University of Texas at Austin, Austin, TX, USA,
  {\rm\url{nikolova@austin.utexas.edu}}
\\[1ex]
 ${}^{**}$ Graduate School of Business, Columbia University, New York, USA, {\rm\url{stier@gsb.columbia.edu}}\\
 School of Business, Universidad Torcuato Di Tella and CONICET, Buenos Aires, Argentina
}
\date{Oct 2014.}
\begin{document}
\begin{abstract}
Considering congestion games with uncertain delays, we compute the
inefficiency introduced in network routing by risk-averse agents. At
equilibrium, agents may select paths that do not minimize the expected
latency so as to obtain lower variability. A social planner, who is likely to be
more risk neutral than agents because it operates at a longer time-scale,
quantifies social cost with the total expected delay along routes. From that
perspective, agents may make suboptimal decisions that degrade long-term
quality. We define the {\em price of risk aversion} (PRA) as the worst-case
ratio of the social cost at a risk-averse Wardrop equilibrium to that where
agents are risk-neutral. For networks with general delay functions and a
single source-sink pair, we show that the PRA depends linearly on the agents'
risk tolerance and on the degree of variability present in the network. In
contrast to the {\em price of anarchy}, in general the PRA increases when the
network gets larger but it does not depend on the shape of the delay functions.
To get this result we rely on a combinatorial proof that employs alternating
paths that are reminiscent of those used in max-flow algorithms. For {\em
series-parallel} (SP) graphs, the PRA becomes independent of the network
topology and its size. As a result of independent interest, we prove that for
SP networks with deterministic delays, Wardrop equilibria {\em maximize} the
shortest-path objective among all feasible flows.
\\[1mm]

\noindent {\sc Keywords:} Congestion Game, Stochastic Networks,
Risk-averse Wardrop Equilibrium, Mean-Stdev Risk Measure, Mean-Var Risk Measure,
Nash Equilibrium, Stochastic Selfish Routing.
\end{abstract}

\maketitle


\section{Introduction}\label{sec:intro}

A central question in decision making is how to make good decisions under
uncertainty, in particular when decision makers are risk averse.
Applications of crucial national importance, including alleviating congestion in
transportation networks, as well as improving telecommunications, robotics, security
and others, all face pervasive uncertainty and often require finding {\em
reliable} or {\em risk-minimizing solutions}. Those applications have
motivated the development of algorithms that incorporate risk primitives, and
the inclusion of risk aversion in questions related to algorithmic game theory.
While risk has been extensively studied in the fields of finance and
operations, among others, in comparison there is relatively little literature
in the theoretical computer science community devoted to this issue. One of
the goals of this paper is to inspire more work devoted to understanding and
mitigating risk in networked systems.

Capturing uncertainty and risk aversion in traditional combinatorial
problems studied by theoretical computer science often reduces to nonlinear or
nonconvex optimization over combinatorial feasible sets, for which no
efficient algorithms are known. Possibly due to the difficulty in writing the
ensuing problems in simple terms, at present we lack a systematic
understanding of how risk considerations can be successfully incorporated into
classic combinatorial problems. Doing so would necessitate new techniques for
analyzing risk-minimizing combinatorial structures rigorously.

Within the fields of algorithms and algorithmic game theory, routing has
proved to be a pervasive source of important questions. Indeed, many
fundamental questions on risk-averse routing are still open, including several
intriguing cases where
the complexity is unknown. For example, in a network with uncertain edge
delays, what is the complexity of finding the path with highest probability
of reaching the destination within a given deadline? What is the path that
minimizes the mean delay plus the standard deviation along that path?  The
best known algorithms for both of these questions have a polynomial runtime and smoothed complexity,
but a subexponential worst-case running time
\citep{nikolova-ssp,nikolova-approx}. Hence, these problems are unlikely to
be NP-hard, yet polynomial algorithms have so far been elusive. There are
also a myriad of other possible risk objectives, yielding nonlinear and
nonconvex optimization problems over the path polytope, that are open from a
complexity, algorithmic and approximation point of view.

Consequently, we are at the very beginning of understanding risk and
designing appropriate models in routing games, which have been instrumental
in the development of algorithmic game theory in the past two decades.
Routing games capture decision-making by multiple agents in a networking
context, internalizing the congestion externalities generated by
self-minded agents. Externalities are traditionally captured by considering
that edge delays are functions of the edge flow.
In addition to the technical challenges involving the characterization and
computation of equilibria, a key insight brought by the study of these
games was that equilibria are not extremely inefficient. The {\em price of
anarchy}---by now a widely studied concept used in a variety of problems and
applications---represents the worst-case ratio between the cost of a Nash
equilibrium and the cost of the socially-optimal solution. This ratio, which
quantifies the degradation of system performance due to selfish behavior, was
first defined in the context of routing and applied to a network with
parallel links~\citep{papadimitriou-equilibriaJournal}. It was subsequently
analyzed for general networks and different types of
players~\citep{roughgarden-selfrouting,roughgarden-priceanarchy,css-capsoue,css-congestion}.
With few exceptions, this stream of work has assumed that delays are
deterministic, while almost every practical situation in which such games
could be useful presents uncertainty. For instance, there are uncertain
delays in a transportation network due to weather, accidents, traffic lights,
etc., and in telecommunications networks due to changing demand, hardware
failures, interference, packet retransmissions, etc.

\vspace{2mm}
\noindent{\bf Risk Model.}
A generalization of the classic selfish routing model introduced by
\cite{beckmann-transportation} to the case of uncertain delays is to
associate every edge to a random variable whose distribution depends on the
edge flow.
The problem faced by a risk-averse agent becomes more involved since it is
not merely finding the shortest path with respect to delays. To find the best
route, agents must consider both the expected delay and the variability along
all possible choices, leading them to solve stochastic shortest path problems.
For instance, it is common that commuters add a buffer to the expected travel
time for their trip to maximize the chance of arriving on-time to an
important meeting or to a flight at the airport. A classic model
in finance that captures the tradeoff between mean and variability is
Markowitz' mean-risk framework~\citep{Markowitz52}. It considers an agent that optimizes a linear
combination of mean and risk, weighted by a {\em risk-aversion coefficient}
that quantifies the degree of risk aversion of that agent. In the
context of routing, \cite{nikolova-stochWE} adapted that framework to
Wardrop equilibria, showed the existence of equilibria, and computed
efficiency indicators such as the price of anarchy.

Of course, there are multiple ways to capture risk. The expected utility
theory~\cite{Neumann44a}, which is prevalent in economics, captures risk-averse preferences
using concave utility functions.  
This theory has been criticized due to unrealistic assumptions such as
independence of irrelevant alternatives so other theories have been proposed (e.g., \citet{TverskyK:1981}).
The theory of coherent risk measures, proposed in the late nineties, takes an
axiomatic approach to risk (e.g., see surveys~\citet{Rockafellar07,Uryasev11} and references therein). \cite{cominetti-riskaverserouting} adapted these ideas to the
context of network routing and concluded that the mean-variance objective has
benefits over other risk measures in being additively consistent. In finance, the
mean-risk and other traditional risk measures have been criticized for
leading to counterintuitive solutions such as preferring stochastically
dominated solutions.
Nevertheless, different risk postulates may be relevant in the context of
transportation and telecommunications.  For instance,
stochastically-dominated routes may be admissible if certainty is more valued
than a stochastically-dominant solution with large variance. Indeed, one may
choose a larger-latency path rather than routing along variable paths that
introduce jitter in real-time communications.
Following on our previous work on risk-averse congestion games, in this paper
we consider both the mean-variance and mean-standard deviation objectives for
risk-averse routing.

It is important to note that risk aversion may induce agents to choose
longer routes to reduce risk, effectively trading off mean with variance.
Hence, a natural question to ask for a network game with uncertain delays
and risk-averse agents is {\em how much of the degradation in system
performance can be attributed to the agents' risk-aversion}. We refer to this
degradation by the {\em price of risk-aversion} (PRA), which we formally
define as the worst-case ratio of the social cost of the equilibrium (with
risk-averse agents) to that of an equilibrium if agents were risk neutral. The
rationale for choosing this particular ratio is that we want to disentangle
the effects caused by selfish behavior, captured by the price of anarchy,
from those caused by risk aversion per se.
The social cost is considered with respect to average delays
because a central planner would typically care about a long-term perspective
and minimize average agent delays and average pollutant emissions.




Using the variance of delay along a route as a risk indicator, although not
directly intuitive since it is not expressed in the same units as the mean
delay, leads to models that satisfy natural and intuitive optimality
conditions for routes; namely, a subpath of an optimal path remains optimal
(called the {\em additive consistency} property).
Indeed, the mean-variance objective is additive along paths (the cost of a path is the
sum of the cost of its edges).
It thus lends itself to
tractable algorithms in terms of computing equilibria, at least as long as
delays are pairwise independent across edges.
On the other hand, comparing a Wardrop equilibrium with risk-averse players
to a standard Wardrop equilibrium is far from straightforward and
requires new techniques for understanding how the two differ.

Alternatively, one could set the risk indicator to be the standard deviation of
delays. A big advantage is that the mean-standard deviation objective can
be thought of as a quantile of delay, easily justifying the buffer time that
commuters consider when selecting when to start the trip. The disadvantages
are that additive consistency is lost and, technically, that to compute the
standard deviation one must take a square root, which makes the objective
non-separable and nonconvex. For more details, we refer the reader to
\cite{nikolova-stochWE}, where
these pros and cons are discussed in further detail.

\subsection*{Our results}

We define a new concept, the price of risk aversion (PRA), as
the worst-case ratio of the social cost (total expected delay) of a risk-averse
Wardrop equilibrium (RAWE) to that of a risk-neutral Wardrop equilibrium
(RNWE).
Our main result, presented in Section~\ref{sec:general}, is a bound on the
price of risk aversion for arbitrary graphs with a single origin-destination (OD)~pair and mean-variance
risk minimizing players. We provide a bound of $1+\gamma\kappa\eta$, where $\gamma$ is the
risk-aversion coefficient, $\kappa$ is the maximum possible variability (variance to mean ratio) of all
edges when the prevailing traffic conditions are those under the equilibrium,
and $\eta$ is a topological parameter that captures how many flow-bearing paths are
needed to cover a special structure called an alternating path.
The parameter $\eta$ strongly depends on the
topology, and is at most half the number of nodes in the network, $\lceil (n-1)/2 \rceil$. The
resulting bound is appealing in that it depends on the three factors that one
would have expected (risk aversion, variability and network size), but perhaps
unexpectedly does so in a linear way and for arbitrary delay functions. The proof of
this result is based on a novel idea of constructing a type of {\em alternating
path} that switches between forward edges for which the flow under a RAWE is
less than or equal to the flow under a RNWE, and backward edges for which the
opposite inequality holds. This construction is the key that allows us to
compare both equilibrium flows and derive our main result.

The proof of the main result consists of three key lemmas that show that (a)
an alternating
path always exists (Lemma~\ref{lemma0}), (b) the cost of a RAWE is
upper-bounded by an inflated total mean delay along forward edges minus the
total mean delay along backward edges (Lemma~\ref{lemma1}), and (c) the cost of a
RNWE is lower-bounded by the total mean delay along forward edges
minus the total mean delay along backward edges (Lemma~\ref{lemma2}). Steps
(a) and (c) are proved independently of the choice of risk model.
Step (b) is more subtle: it constructs a series of subpaths that connect different parts of the
alternating path to the source and the sink, and uses the equilibrium
conditions to provide partial bounds for subpaths of the alternating path.
The lemma then exploits the linearity of the mean-variance objective to get a
telescopic sum that simplifies precisely to the total delay along the
alternating path.

Theorem~\ref{thm:general} puts the lemmas together and upper bounds the total
mean delay of the forward subpaths in the alternating path by the cost of the
RNWE times the number of such forward paths, obtaining the factor $\eta \leq \lceil (n-1)/2 \rceil$ in the
worst-case, as mentioned above. Although it is an open question whether our bound is tight
for general graphs, we prove that it is tight for two
families of graphs. For series-parallel (SP) graphs, it turns out that there
must exist an alternating path that consists of only forward edges (that is, $\eta = 1$), which
implies that the price of risk-aversion for those topologies is exactly
$1+\gamma\kappa$ (Corollary~\ref{cor:sp} and lower bound in
Section~\ref{sec:lowerbound}). For Braess graphs, we establish that the price
of risk aversion is bounded by $1+2\gamma\kappa$ and this bound is tight, as well.

As mentioned above, many of the results for the mean-variance risk model
extend to the mean-stdev objective. In particular, the only piece missing to prove
a general theorem is an equivalent of Lemma~\ref{lemma1}, which bounds the
cost of the RAWE by an expression of the edge delays along the alternating
path.  The difficulty in extending our current proof to general graphs is the
nonlinearity of the mean-stdev cost function, which in turn puts a
restriction on the equilibrium flow in that its edge-flow representation cannot
be decomposed arbitrarily to a path-flow representation. (The latter leads to
an interesting open problem, posed by \cite{nikolova-stochWE}: is
there an efficient algorithm that converts a given equilibrium edge-flow
vector into an equilibrium path-flow decomposition? That reference shows that
a succinct path flow decomposition that uses polynomially-many paths exists.)
Circumventing the nonlinearity challenge, we are able to prove the equivalent
of Lemma~\ref{lemma1} on the Braess graph with a more involved case analysis
(Lemma~\ref{lemma1:braess-stdev}). Henceforth, we establish that the exact
value of the price of risk aversion for Braess graphs in the
mean-stdev case match those in the mean-variance case.

The independence of the network topology property for SP graphs also extends to
the mean-stdev case. To obtain
that extension,
we provide a result for SP graphs that is interesting in its own
right. As discussed earlier, it has already been established that RNWE are
typically not extremely inefficient because the price of anarchy is bounded.
We prove that considering SP graphs with deterministic delays, the equilibrium
maximizes the shortest path objective among all feasible flows
(Theorem~\ref{thm:SP}).
In other words, the shortest
path for an arbitrary flow can never be longer than the shortest path for the
RNWE.

\section{Related work}\label{sec:related}


In this work we consider how having stochastic delays and risk-averse users influence the
traditional competitive network game introduced by \cite{wardrop-traffic}. He
postulated that the prevailing traffic conditions can be determined from the
assumption that users jointly select shortest routes, and the mathematics
that go with this idea were formalized in an influential book by
\cite{beckmann-transportation} where they lay out the foundations to analyze
network games. These models find applications in various application domains
such as in transportation \citep{sheffi-book} and telecommunications
\citep{altman-surveynetgames}.
In the last decade, these types of models have received renewed attention
with many studies aimed at understanding under what conditions these games
admit an equilibrium, what uniqueness properties are satisfied by these
equilibria, what methods can be used to compute equilibria efficiently, what
price is paid for having competition instead of a centralized solution, and
what are good ways to align incentives so an equilibrium becomes socially
optimal. For references on these topics from a perspective similar to ours,
we refer the readers to the surveys \citep{nisan-algoGT,cs-wardrop}.

The route-choice model in this paper consists of users that select the path
that minimizes the mean plus a multiple of the variability of travel time
(captured by either variance or standard deviation).
Exact algorithms and fully-polynomial approximation schemes have been proposed for this problem and for a more general
risk-averse combinatorial framework by \cite{nikolova-ssp,nikolova-approx}.
Approximation algorithms for other risk-averse combinatorial frameworks
were provided by~\cite{brand-shochSP,Swamy:2011,LiD:2011}.  We refer the reader to these papers for
a more extensive list of references on risk-averse combinatorial optimization.





There is a growing literature on stochastic congestion games. \cite{os-rweTS}
introduce a game with uncertain delays and risk-averse users and study
how the solutions provided by it can be approximated numerically by an
efficient column-generation method that is based on robust optimization. The
main conclusion is that the solutions computed using their approach are good
approximations of {\em percentile equilibria} in practice. Here, a percentile
equilibrium is a solution in which percentiles of travel times along
flow-bearing paths are minimal. They also use their algorithm to compare
equilibria with risk-averse players to those with risk-neutral players, as in
the standard Wardrop model. \cite{nie-perceq}, who also studies percentile
equilibria, considers an instance with two edges and exogenous standard
deviations in detail, provides a gradient projection algorithm to find
percentile equilibria, and uses it to perform a computational study.
\cite{nikolova-stochWE} prove existence and POA results for the exact model
we consider here, when the variability is captured by the standard deviations
of latencies.
\cite{PiliourasNS:2013} consider the sensitivity of the price of anarchy to several risk averse user objectives, in a different routing game model with atomic players and affine latency functions.
\cite{AngelidakisFL13} also focus on atomic congestion games with uncertainty induced by stochastic players or stochastic delays, and characterize when equilibria can be efficiently computed.
For other references, we direct the reader to \cite{nikolova-stochWE}.

For network and congestion games, a series of papers in the last 15 years
have studied the inefficiency introduced by self-minded behavior. To quantify
that inefficiency, \cite{papadimitriou-equilibriaJournal} computed the
supremum over all problem instances of the ratio of the equilibrium cost to
the social optimum cost, which has been called the {\em price of anarchy}
(POA) by \cite{papadimitriou-algogamesinternet}. The POA has been analyzed extensively
in relation to transportation and telecommunications networks with
increasingly more general assumptions
\citep{roughgarden-selfrouting,roughgarden-priceanarchy,css-capsoue,chau-soue,perakis-soue,css-congestion}.
\cite{nikolova-stochWE} extended that notion to the case of stochastic
delays with risk-averse players.  A different concept, the price of uncertainty,
was considered in congestion games in reference to how best response dynamics change
under randomness introduced by an adversary and random ordering of players~\citep{BalcanBM:2009}.
%
%
Risk aversion in the algorithmic game theory literature has been considered recently in the context of general games (e.g., \citet{FiatP:2010} who focus on hardness results) and mechanism design (e.g., \citet{DughmiP2012,FuHH:2013,Dughmi2014}).



\section{The Model}\label{sec:model}

We consider a directed graph $G=(\GV,\GA)$ with a single source-sink pair
$(s,t)$ and an aggregate demand of $d$ units of flow that need to be routed
from $s$ to $t$. We let $\paths$ be the set of all feasible paths between
$s$ and $t$. We encode players decisions as a flow vector $\flow
=(\flow_{\route})_{\route\in\paths}\in \R^{|\paths|}_+$ over all paths. Such
a flow is feasible when demand is satisfied, as given by the constraint
$\sum_{\route\in\paths} \flow_{\route}=d$. For notational simplicity, we
denote the flow on a directed edge $e$ by $\flow_e = \sum_{\route \ni e}
\flow_{\route}$. When we need multiple flow variables, we use the analogous
notation $\xvec, x_{\route}, x_e$ and $\zvec, z_{\route}, z_e$.

The network is subject to congestion, modeled with stochastic delay functions
$\ell_e(\flow_e) +\xi_e(\flow_e)$ for each edge $e\in \GA$. Here,
$\ell_e(\flow_e)$ measures the expected delay when the edge has flow $\flow_e$,
and $\xi_e(\flow_e)$ is a random variable that represents a noise term on the
delay, encoding the error that $\ell_e(\cdot)$ makes. Functions
$\ell_e(\cdot)$, generally referred to as {\em latency functions}, are assumed
continuous and non-decreasing. The expected latency along a path $p$ is given by
$\ell_{\route}(f):=\sum_{e\in \route} \ell_e(\flow_e)$.

Random variables $\xi_e(\flow_e)$ have expectation equal to zero and standard
deviation equal to $\std_e(\flow_e)$, for arbitrary continuous 
functions $\std_e(\cdot)$.
%
%
We assume that these random variables are pairwise independent. From there,
the variance along a path equals $\var_{\route}(f)=\sum_{e\in
\route} \std_e^2(\flow_e)$, and the standard deviation (stdev) is
$\sigma_{\route}(f)=(\var_{\route}(f))^{1/2}$.
We will initially work with variances and then extend the model to standard
deviations, which have the complicating square roots. (For details on the
complications, we refer
the reader to \citet{nikolova-stochWE}).

We will consider the {\em nonatomic} version of the routing game where
infinitely many players control an infinitesimal amount of flow each so that
the path choice of a single player does not unilaterally affect the costs
experienced by other players (even though the joint actions of players affect
other players).

As explained in the introduction, players are risk-averse and hence choose paths taking into account the variability of delays.
We follow the literature and perturb the mean delay of path $p$ with a factor of the variance:
\begin{equation}\label{eqn:pathcost-var}
\pathcost^{\gamma}_{\route}(\flow) =
\ell_{\route}(\flow)+\gamma \var_{\route}(\flow).
\end{equation}
This objective function will be referred to as the {\em mean-var} objective, and
frequently simply as the {\em path cost} (as opposed to latency). Here, $\gamma\ge 0$
is a constant that quantifies the risk-aversion of the players, which we
assume homogeneous. The special case of $\gamma = 0$ corresponds to
risk-neutrality.

The variability of delays is usually not too large with respect to the
expected latency. It is common to consider the {\em coefficient of
variability} $CV_e(\flow_e):=\std_e(\flow_e)/\ell(\flow_e)$ given by the
ratio of the standard deviation to the expectation as a relative measure of
variability. In this case, we consider the {\em variance-to-mean ratio}
$\var_e(\flow_e)/\ell(\flow_e)$ as a relative measure of variability.
Consequently, we assume that $\var_e(\xx_e)/\ell_e(\xx_e)$ is bounded from
above by a fixed constant $\kappavar$ for all $e\in \GA$ at the equilibrium
flow $\xx_e\in \R_+$. This means that the variance cannot be larger than
$\kappavar$ times the expected latency in any edge at the equilibrium flow.

%
%

In summary, an instance of the problem in given by the tuple
$(G,d,\ell,\var,\gamma)$, which represents the topology, the demand, the
latency functions, the variability functions, and the degree of risk-aversion of players.



The following definition captures that at equilibrium players route flow
along paths with minimum cost $\pathcost^{\gamma}_{\route}(\cdot)$. In
essence, users will switch routes until at equilibrium costs are equal along
all used paths.  This is the natural extension of the traditional Wardrop
Equilibrium to risk-averse users.

\begin{definition}[Equilibrium]\label{defi:eq}
A {\em $\gamma$-equilibrium} of a stochastic nonatomic routing game is a flow
$\flow$ such that for every $k\in K$ and for every path $p\in\paths_k$
with positive flow, the path cost $\pathcost^{\gamma}_{p}(\flow) \leq
\pathcost^{\gamma}_{q}(\flow)$ for any other path $q \in \paths_k\,$. For a
fixed risk-aversion parameter $\gamma$, we refer to a $\gamma$-equilibrium as a {\em risk-averse
Wardrop equilibrium} (RAWE) and denote it by $\xx$.
\end{definition}

Notice that since the variance decomposes as a sum over all the edges that
form the path, the previous definition represents a standard Wardrop
equilibrium with respect to modified costs $\ell_e(\flow_e)+\gamma\var_e(\flow_e)$.
For the existence of the equilibrium, it is sufficient that the modified cost functions are increasing. 

Our goal is to investigate the effect that risk-averse decision-makers have
on the quality of equilibria. The quality of a solution that represents
collective decisions can be quantified by the cost of equilibria with respect
to expected delays since, over time, different realizations of delays average
out to the mean by the law of large numbers. For this reason, a social
planner, who is concerned about the long term, is typically risk neutral, as
opposed to users who tend to be more emotional about decisions. Furthermore,
the social planner may aim to reduce long-term emissions, which would be
better captured by the total expected delay of all users. These arguments
justify the difference between the risk aversion coefficient that
characterizes user behavior at equilibrium and the behavior of the social
planner.

\begin{definition}
The {\em social cost} of a flow $\flow$ is
defined as the sum of the expected latencies of all players:
\begin{equation}\label{eq:soccost}
\CC(\flow) := \sum_{\route \in \paths} \flow_{\route}\ell_{\route}(\flow)=
\sum_{e \in \GA} \flow_e\ell_e(\flow_e).
\end{equation}
\end{definition}

Although one could have measured total cost as the weighted sum of the costs
$\pathcost^{\gamma}_{\route}(\flow)$ of all users, this captures users'
utilities but not the system's benefit. Although one can consider such a cost
function to compute the price of anarchy (see \citet{nikolova-stochWE}), in the
current paper our goal is to compare across different values of risk aversion
so we want the various flow costs to be measured in the same unit.

The following definition captures the increase in social cost at equilibrium
introduced by user risk-aversion, compared to the cost one would have if users
were risk-neutral. Hence, we use a {\em risk-neutral Wardrop equilibrium}
(RNWE), defined as a $0$-equilibrium according to Definition~\ref{defi:eq}, as
the yardstick to determine the inefficiency caused by risk-aversion. We define
the price of risk aversion as the worst-case ratio among all possible instances
of expected costs of the risk-averse and risk-neutral equilibria.

\begin{definition}
Consider a family of instances $\mathcal{F}$ of a routing game with uncertain
delays. The {\em price of risk aversion} (PRA) associated with the
risk-aversion coefficient $\gamma$ is defined by
\begin{equation}\label{eq:PoR}
PRA(\mathcal{F},\gamma):=\sup_{
G,d,\ell,\var:
(G,d,\ell,\var,\gamma)\in \mathcal{F}
}\qquad\frac{\CC(\xx)}{\CC(\zz)},
\end{equation}
where $\xx$ and $\zz$ are the RAWE and the RNWE of the
corresponding instance.
\end{definition}

Intuitively, this ratio would depend on the topology of the instance $G$, on
the risk-aversion coefficient $\gamma$, and on the degree of the variability of
delays $\kappavar$. We present the following example to motivate the form of
the bound to the PRA, which is going to be linear both in $\gamma$ and in
$\kappavar$.
%
%
The example is based on a simple network with two edges, usually referred to as the {\em Pigou network}
\citep{pigou1920,roughgarden-selfrouting}.

\begin{example}\label{ex:pigou}
Consider an instance with two nodes connected by two parallel edges with
latencies equal to $(1+\gamma\kappavar)x$ and $1$, respectively, and
variances equal to $\var_1=0$ and $\var_2=\kappavar$. The total demand is a
unit. Computing equilibria, the RNWE flow routes $1/(1+\gamma\kappavar)$
units along the first edge and $\gamma\kappavar/(1+\gamma\kappavar)$ along
the second. This gives a total cost of $1$. Instead, the RAWE flow routes all
the flow along the first edge, which gives a total cost of
$1+\gamma\kappavar$. Dividing, we get that PRA$\ge 1+\gamma\kappavar$.
\end{example}

The previous example motivates the need of imposing an upper bound on the
variability of delays.

\begin{remark}\label{remark:pigou}
Taking $\kappavar\to\infty$ in the previous example, it follows that if one
does not constrain variability of delays, the price of risk aversion is
unbounded.
\end{remark}

Having bounded variability is a
reasonable assumption in real-life transportation networks since the
variability is never too many times larger than the expected latency of an
edge. Moreover, the more congested the network is, the less variable delays
are since speeds approach zero and hence the possibilities of variation are
minimal. In the following section, we shall prove that $1+\gamma\kappavar$ is
a matching upper bound for instances with the topology of Pigou networks.
Indeed, we will see that this will be a special case of a result for general
topologies.

\section{PRA in General graphs}\label{sec:general}

We first generalize the lower bound given in the previous section to suggest
what one can expect in general and then find an upper bound to the price of risk aversion (PRA).

\subsection{A lower bound on the PRA for the Braess graph}\label{sec:lowerbound}

We exhibit a family of instances with the topology of a Braess network
\citep{braess-paradox,roughgarden-selfrouting} where the PRA tends to
$1+2\gamma\kappavar$, establishing that lower bound.

\begin{example}\label{ex:braessVar}
Consider the symmetric Braess network instance in
Figure~\ref{fig:pra-counterexample}. We refer to the top path as $\braessp$, which
consists of edges $(a,b)$. Similarly, we refer to the bottom path as $\braessq$,
which consists of edges $(c,d)$. Last, we refer to the zigzag path as $\braessr$,
which consists of edges $(a,e,d)$. Latencies are given by $\ell_a(x) =
\ell_d(x) = \alpha x$, $\ell_b(x) = \ell_c(x) = 1$ and $\ell_e(x) = \beta + v$,
and edge variances by $\var_a(x)=\var_d(x)=\var_e(x)=0$ and $\var_b(x) = \var_c(x) =
v$. Further, we set $\gamma = 1$ and $\kappavar = v$ so that $\gamma \kappavar=
v$. These definitions verify that the variance-to-mean ratio is bounded by
$\kappavar$.
\begin{figure}
\begin{center}
\psfrag{1}[c][c]{\footnotesize$(1,v)$}
\psfrag{z}[l][l]{\footnotesize$(\beta+v,0)$}
\psfrag{d}[c][c]{\footnotesize$1$}
\psfrag{x}[c][c]{\footnotesize$(\alpha x,0)$}
\includegraphics[height=30mm]{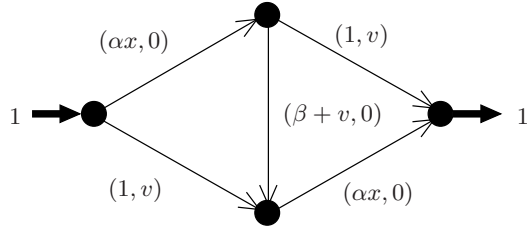}
\end{center}
\caption{Braess instance corresponding to Example~\ref{ex:braessVar}. Edges are labeled with (mean, variance) pairs.}\label{fig:pra-counterexample}
\end{figure}
%
%

\begin{itemize}
\item First, we compute the {\em risk-averse equilibrium $\xx$}. To do that, we set all
path costs to be equal at equilibrium. By symmetry, it is enough that
$\pathcost^{\gamma}_{\braessp}(\xx)= \pathcost^{\gamma}_{\braessr}(\xx)$, which is equivalent
to:
%
$$ \ell_a(\xx_a) + \ell_b(\xx_b) + \gamma \var_\braessp = \ell_a(\xx_a) + \ell_e(\xx_e) + \ell_d(\xx_d)
\Leftrightarrow  \ell_b(\xx_b) + \gamma \var_\braessp = \ell_a(\xx_a) + \ell_e(\xx_e).$$
Using the structure of Braess, $\xx_a=1-\xx_\braessp$, thus from above $1 + v = \alpha(1-\xx_\braessp) + \beta + v$, which simplifies to $1 + \alpha \xx_\braessp = \alpha + \beta$.
We will ensure that $\xx_\braessr=1$ (that is, the zigzag path carries all flow) is an equilibrium.
In that case, $\alpha + \beta = 1$, so we set $\beta = 1 - \alpha$.
Therefore, the social cost of the risk-averse equilibrium is
$\CC(\xx) = \ell_\braessr(\xx)= 2\alpha +\beta +v = \alpha + v + 1$.

\item Next, let us compute the {\em risk-neutral equilibrium $\zz$}.
%
Proceeding as before, we need that
$\ell_{\braessp}(\zz)= \ell_{\braessr}(\zz)$, which is equivalent
to:
$$\ell_a(\zz_a) + \ell_b(\zz_b) = \ell_a(\zz_a) + \ell_e(\zz_e) + \ell_d(\zz_a)
\Leftrightarrow 1 = \beta + v + \alpha(1-\zz_\braessp).$$
After some algebra, we get that
$\zz_\braessp = v/\alpha$. Since $\zz_\braessp \in [0,1/2]$ by construction,
$\alpha \geq 2 v$.
The social cost of the risk-neutral equilibrium is
$
\CC(\zz) = \ell_\braessp(\zz) =
\alpha (1-v/\alpha) + 1=
\alpha - v + 1
$.

\item Putting it all together, the {\em PRA} is
$$
\frac{\CC(\xx)}{\CC(\zz)} = \frac{\alpha+1+v}{\alpha+1-v}\,.
$$
To see how this ratio depends on $\gamma\kappavar$, we set it equal to $1+h
\gamma\kappavar= 1+h v$ and solve for $h$. Indeed,
$$
h v = \frac{\alpha+1+v}{\alpha+1-v} - 1 =
\frac{2v}{\alpha+1-v}\,,$$
from where $h = 2/(\alpha+1-v)$. To conclude that a lower bound for the PRA for
the Braess instance is $1+2\gamma\kappavar$, we see that $h\to 2$ as
$v\rightarrow 0$ when $\alpha=2v$.
\end{itemize}
\end{example}

Summarizing the previous example, at the end the worst-case instance is
parameterized only by $v$. Latencies are given by $\ell_a(x) = \ell_d(x) = 2v
x$, $\ell_b(x)=\ell_c(x)=1$ and $\ell_e(x) = 1-v$. The RNWE is
$\zz_\braessp=\zz_\braessq=1/2$ with cost $\CC(\zz)=\ell_\braessp(\zz)=1+v$, and the RAWE is $\xx_\braessr=1$
with cost $\CC(\xx)=\ell_\braessr(\xx)=1+3v$. Dividing, we get that the PRA is lower
bounded by $1+2v/(v+1)$. Hence, the worst-case bound of $1+h\gamma\kappavar$ is
achieved when $v=\kappavar\to 0$ and in that case $h\to 2$.

Having presented this lower bound on the PRA, we next derive an upper bound for general
networks and in the process we show that it is tight for Braess instances.

\subsection{An upper bound on the PRA for general graphs}\label{sec:upperbound}

We start by introducing bounds on the latency of the risk-averse Wardrop equilibrium (RAWE),
which we will use to
find the PRA. As before, we let $\zz$ denote the risk-neutral Wardrop equilibrium (RNWE) and let $\xx$ denote the
RAWE. It is well-known that, by definition, the social cost $\CC(\zz)$ of a
RNWE can be upper-bounded by the latency $\ell_\route(\zz)$ of an arbitrary path $\route\in
\paths$, and the bound is tight if the path carries flow. We now extend that
argument to a RAWE. We prove that its social cost is bounded by the cost
$\pathcost_\route(\zz)$ of an arbitrary path $\route\in\paths$. As a corollary,
$\CC(\zz)$ is also bounded by the expected latency of an arbitrary path, blown
up by a constant that depends on the risk-aversion coefficient $\gamma$ and the
maximum coefficient of variation $\kappavar$.

\begin{lemma}\label{lemma-UBcostRAWE-var}
Consider an arbitrary network instance with general latencies and variance functions.
Letting $\route\in\paths$ denote an arbitrary path (potentially not carrying flow
at equilibrium), the social cost of a RAWE
$\CC(\xx)$ is upper bounded by the path cost $\pathcost_\route(\xx)$.
\end{lemma}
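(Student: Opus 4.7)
The plan is to combine non-negativity of the variance term with the defining equilibrium inequality. First, since $\gamma\ge 0$ and the path variance $\var_\route(\xx)\ge 0$ for every $\route\in\paths$, we have $\ell_\route(\xx)\le \ell_\route(\xx)+\gamma\var_\route(\xx)=\pathcost^\gamma_\route(\xx)$. Substituting this bound termwise into the definition of the social cost gives
\begin{equation*}
\CC(\xx)\;=\;\sum_{\route\in\paths}\xx_\route\,\ell_\route(\xx)\;\le\;\sum_{\route\in\paths}\xx_\route\,\pathcost^\gamma_\route(\xx).
\end{equation*}
This is the only place where the gap between expected latency and mean-var cost enters.

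Next, I would invoke Definition~\ref{defi:eq}: every path $\route\in\paths$ with $\xx_\route>0$ achieves the minimum mean-var cost at the RAWE, so $\pathcost^\gamma_\route(\xx)\le \pathcost^\gamma_q(\xx)$ for every other path $q\in\paths$. Choosing $q$ to be the arbitrary path appearing in the statement and plugging this uniform upper bound into the weighted sum, together with the unit-demand normalization $\sum_\route \xx_\route = d = 1$ consistent with the running Pigou and Braess examples, yields
\begin{equation*}
\sum_{\route\in\paths}\xx_\route\,\pathcost^\gamma_\route(\xx)\;\le\;\pathcost^\gamma_q(\xx)\sum_{\route\in\paths}\xx_\route\;=\;\pathcost^\gamma_q(\xx),
\end{equation*}
which is precisely the claimed inequality.

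There is essentially no obstacle here: the argument is the direct analogue for risk-averse equilibria of the classical observation that the cost of a Wardrop equilibrium is bounded by the latency of any flow-bearing path. The only conceptual subtlety worth flagging is that the comparison is against the mean-var cost $\pathcost^\gamma_\route(\xx)$ rather than the expected latency $\ell_\route(\xx)$, and this is exactly what the variance non-negativity step supplies. As the text preceding the lemma anticipates, one can then pay an additional multiplicative factor of at most $1+\gamma\kappavar$ by invoking $\var_e(\xx_e)\le \kappavar\,\ell_e(\xx_e)$ at equilibrium to pass from the mean-var bound to an inflated expected-latency bound.
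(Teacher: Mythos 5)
Your proof is correct and takes essentially the same approach as the paper's: both arguments combine the equilibrium inequality $\pathcost^{\gamma}_{q}(\xx)\le\pathcost^{\gamma}_{\route}(\xx)$ on flow-bearing paths with the discarding of a non-negative variance term, merely applying the two steps in the opposite order (the paper first invokes equilibrium and drops $\gamma\sum_{q}\xx_q\var_q(\xx)$ at the end, whereas you first bound $\ell_q$ by $\pathcost^{\gamma}_q$ and then invoke equilibrium). The unit-demand normalization you flag explicitly is also used implicitly in the paper's own chain of equalities, so there is no gap.
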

\begin{proof}
From the equilibrium conditions, we have that $\ell_q(\xx) + \gamma
\var_q(\xx) \le \ell_\route(\xx) + \gamma \var_\route(\xx)$
for all paths $q\in\paths$ that carry positive flow. Therefore,
\begin{align*}
\CC(\xx) = \sum_{q\in \paths}\xx_q\ell_q(\xx)
\le& \sum_{q\in \paths}\xx_q \left[\ell_\route(\xx) + \gamma\var_\route(\xx) - \gamma\var_q(\xx)\right] \\
=& \ell_\route(\xx) + \gamma\var_\route(\xx) - \gamma\sum_{q\in \paths} \xx_q \var_q(\xx) \\
\leq& \ell_\route(\xx) + \gamma \var_\route(\xx)  = \pathcost_\route(\xx).
\end{align*}
Here, we have used the equilibrium condition, and removed a negative term.
\end{proof}

\begin{corollary}\label{cor-UBcostRAWE-var}
Consider a general instance with a single source-sink pair, general latencies
and general variance functions that satisfy that the variance-to-mean ratio at
equilibrium is bounded by $\kappavar$. Letting $\route\in\paths$ denote an arbitrary
path (potentially not carrying flow at equilibrium), the social cost
$\CC(\xx)$ of a RAWE $\xx$ is upper bounded by $(1+\gamma\kappavar)\ell_{\route}(\xx)$.
\end{corollary}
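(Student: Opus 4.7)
The plan is to combine Lemma~\ref{lemma-UBcostRAWE-var} with the edge-level variance-to-mean bound to obtain the claimed inequality. Lemma~\ref{lemma-UBcostRAWE-var} already gives $\CC(\xx) \leq \pathcost_\route(\xx) = \ell_\route(\xx) + \gamma\var_\route(\xx)$ for every $\route\in\paths$, so the only remaining task is to control $\var_\route(\xx)$ by $\ell_\route(\xx)$.

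To do that, I would invoke the hypothesis that $\var_e(\xx_e)/\ell_e(\xx_e) \leq \kappavar$ on every edge $e\in\GA$ at the equilibrium flow $\xx$, i.e.\ $\var_e(\xx_e)\leq \kappavar\,\ell_e(\xx_e)$. Because delays are pairwise independent, both the expected latency and the variance of a path decompose additively over its edges, so summing the edge-level inequality over $e\in\route$ yields
\[
\var_\route(\xx) \;=\; \sum_{e\in \route}\var_e(\xx_e)\;\leq\;\kappavar\sum_{e\in \route}\ell_e(\xx_e)\;=\;\kappavar\,\ell_\route(\xx).
\]
Plugging this back into the bound from the lemma gives $\CC(\xx) \leq \ell_\route(\xx) + \gamma\kappavar\,\ell_\route(\xx) = (1+\gamma\kappavar)\,\ell_\route(\xx)$, which is exactly the stated conclusion.

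There is no real obstacle here: the corollary is essentially a two-line composition of Lemma~\ref{lemma-UBcostRAWE-var} with the definition of $\kappavar$, and the single source--sink hypothesis enters only to ensure that we are in the setting where $\pathcost_\route(\xx)$ and $\ell_\route(\xx)$ are well-defined for every $\route\in\paths$ (so that the lemma applies verbatim). The additive decomposition of variance along a path—guaranteed by the pairwise independence assumption of the model—is what makes the passage from the edge-wise bound to the path-wise bound immediate.
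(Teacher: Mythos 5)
Your proposal is correct and follows exactly the paper's own argument: apply Lemma~\ref{lemma-UBcostRAWE-var} to bound $\CC(\xx)$ by $\pathcost_\route(\xx)=\ell_\route(\xx)+\gamma\var_\route(\xx)$, then use the edge-wise bound $\var_e(\xx_e)\leq\kappavar\,\ell_e(\xx_e)$ summed along the path to conclude $\CC(\xx)\leq(1+\gamma\kappavar)\ell_\route(\xx)$. No gaps.
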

\begin{proof}
The result follows from Lemma~\ref{lemma-UBcostRAWE-var} by noting that the
mean-var cost of a path is bounded as follows:
$$
\pathcost_\route(\xx) = \ell_\route(\xx) + \gamma  { \sum_{e \in \route}\var_e(\xx_e) }
\leq \ell_\route(\xx) + \gamma  \sum_{e \in \route}\kappavar\ell_e(\xx_e)
\leq \ell_\route(\xx) (1+\gamma \kappavar),
$$
by the assumption that $\var_e(\xx_e) \leq \kappavar \ell_e(\xx_e)$ for all edges $e\in\GA$ at the equilibrium $\xx$.
\end{proof}

To get the tightest possible upper bound, one would consider the
shortest path with respect to the variances induced by the RAWE.
Selecting a specific path, we can get rid of the $\gamma\kappavar$ factor and
bound the total cost by the expected latency of that particular path, as established in the next lemma.

\begin{lemma}\label{thm-UBcostRAWE2-var}
Consider a general instance with a single source-sink pair, general latencies
and general variance functions. Letting $\route\in\paths$ denote the path that
minimizes the variance under a RAWE $\xx$ (where the path $\route$ may or may not
carry flow), the social cost $\CC(\xx)$ is upper bounded by $\ell_{\route}(\xx)$.
\end{lemma}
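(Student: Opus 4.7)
The plan is to refine the argument used in Lemma~\ref{lemma-UBcostRAWE-var}: the bound derived there, $\CC(\xx)\le \pathcost_\route(\xx)=\ell_\route(\xx)+\gamma\var_\route(\xx)$, was obtained by discarding the nonnegative term $\gamma\sum_{q\in\paths}\xx_q\var_q(\xx)$. My idea is to retain that term and show that, when $\route$ is chosen to minimize variance among all $s$--$t$ paths, it outweighs $\gamma\var_\route(\xx)$, so the two $\gamma$-dependent contributions cancel out and only $\ell_\route(\xx)$ remains on the right-hand side.

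Concretely, I would start from the RAWE equilibrium condition, which asserts that $\ell_q(\xx)+\gamma\var_q(\xx)\le\ell_\route(\xx)+\gamma\var_\route(\xx)$ for every flow-bearing path $q\in\paths$ and any fixed reference path $\route\in\paths$. Weighting by $\xx_q\ge 0$, summing over $q\in\paths$ (paths with $\xx_q=0$ contribute nothing and can be included), and using that the total demand is normalized so that $\sum_{q\in\paths}\xx_q=1$, I obtain
\[
\CC(\xx)+\gamma\sum_{q\in\paths}\xx_q\var_q(\xx)\ \le\ \ell_\route(\xx)+\gamma\var_\route(\xx).
\]
I now specialize $\route$ to be the minimum-variance $s$--$t$ path under $\xx$, which is well-defined since there are only finitely many simple paths. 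The inequality $\var_\route(\xx)\le\var_q(\xx)$ then holds for every $q\in\paths$, and taking a convex combination against the weights $\xx_q$ gives
\[
\var_\route(\xx)\ =\ \sum_{q\in\paths}\xx_q\var_\route(\xx)\ \le\ \sum_{q\in\paths}\xx_q\var_q(\xx).
\]
Substituting back, the two $\gamma$-terms combine to a nonpositive quantity, and the desired bound $\CC(\xx)\le\ell_\route(\xx)$ follows immediately.

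There is no serious technical obstacle here; the lemma is essentially the observation that the slack $\gamma\sum_q\xx_q\var_q(\xx)$ which Lemma~\ref{lemma-UBcostRAWE-var} threw away is at least as large as the $\gamma\var_\route(\xx)$ added on the right-hand side whenever $\route$ minimizes variance under $\xx$. The only subtlety worth flagging is the implicit use of unit demand (consistent with the proof of Lemma~\ref{lemma-UBcostRAWE-var}), which ensures that the weights $\xx_q$ form a probability distribution over paths so both the averaging of the equilibrium inequality and the convex-combination step above go through cleanly.
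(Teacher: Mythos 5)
Your proof is correct and is essentially identical to the paper's: both retain the term $\gamma\sum_{q}\xx_q\var_q(\xx)$ from the first inequality of Lemma~\ref{lemma-UBcostRAWE-var} and use the minimality of $\var_\route(\xx)$ to conclude that $\gamma\sum_q\xx_q(\var_\route(\xx)-\var_q(\xx))\le 0$. The unit-demand normalization you flag is indeed also implicit in the paper's argument.
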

\begin{proof}
Let the path $\route:=\arg\min\{\var_q:q\in\paths\}$ be the
path with least variability among those used in a risk-averse equilibrium.
Using the first inequality of Lemma~\ref{lemma-UBcostRAWE-var} and that
$\var_\route(\xx)\le \var_q(\xx)$ for all $q\in\paths$, we have that $\CC(\xx)\le
\ell_\route(\xx)+\gamma\sum_{q\in\paths} \xx_q(\var_\route(\xx)-\var_q(\xx))\le
\ell_\route(\xx)$.
\end{proof}

We proceed to bound the price of risk aversion on a general graph by an
appropriate construction of an alternating path that contains edges from the
following two sets, which form a partition of the edges in $\GA$:
\begin{align*}
A =& \{e\in \GA \,|\, \zz_e \geq \xx_e \mbox{ and } \zz_e > 0 \} \\
B =& \{e\in \GA \,|\, \zz_e <\xx_e \mbox{ or } \zz_e=\xx_e=0\}
\end{align*}

We will assume from here on that edges in $B$ satisfy $\zz_e <\xx_e$ since edges that carry no flow in both equilibria ($\zz_e=\xx_e=0$) can be removed from the graph without loss of generality.
If there is a full $s$-$t$ path $\altp$ contained in the set $A$, then it is not
too hard to prove that $\CC(\xx) \le (1+\gamma\kappavar) \CC(\zz)$. In other words, this would give the lowest possible PRA bound of $1+\gamma\kappavar$. We will now
prove that this bound can be extended to {\em alternating paths} in $G$, which
are paths from $s$ to $t$ consisting of edges in $A$ plus reversed edges in $B$.
%
%
We shall refer to edges on the alternating path that belong to $A$ as forward edges and those in $B$ as backward edges.

\begin{figure}[t]
\centerline{
\psfrag{s}{\small$s$}
\psfrag{t}{\small$t$}
\psfrag{ck}{\small$C_k$}
\psfrag{ckmo}{\small$C_{k-1}$}
\psfrag{ckmt}{\small$C_{k-2}$}
\psfrag{ak}{\small$A_k$}
\psfrag{akmo}{\small$A_{k-1}$}
\psfrag{akpo}{\small$A_{k+1}$}
\psfrag{bk}{\small$B_k$}
\psfrag{bkmo}{\small$B_{k-1}$}
\psfrag{dk}{\small$D_k$}
\psfrag{dkmo}{\small$D_{k-1}$}
\psfrag{dkpo}{\small$D_{k+1}$}
\includegraphics[width=3.5in]{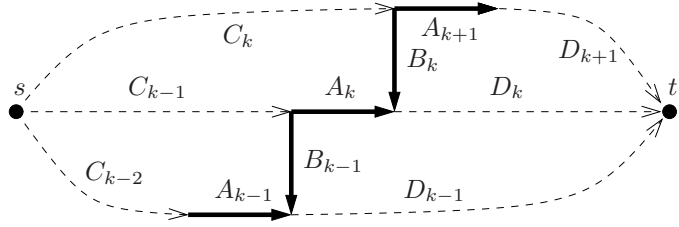}
}
\caption{Part of an alternating path. Labels denote the names of subpaths used in this section.}
\label{fig:alternatingpath}
\end{figure}

\begin{definition}
We say that a path $\altp:=(e_1,\ldots,e_r)$ from $s$ to $t$ is {\em alternating}
if reversing the direction of edges in $B$ makes it a feasible $s$-$t$ path.
\end{definition}

Figure~\ref{fig:alternatingpath} provides an illustration of the alternating
path definition (in bold) where reversing edges in $B$ creates a feasible path. The
following existence proof follows from flow conservation and the definitions of
sets $A$ and $B$.

\begin{lemma}\label{lemma0}
An alternating path exists.
\end{lemma}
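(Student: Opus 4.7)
The plan is to reduce the existence of an alternating path to the reachability of $t$ from $s$ in an auxiliary directed graph $H$, and then prove reachability by a min-cut style argument that uses flow conservation for both $\zz$ and $\xx$.

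First, I would construct $H$ on the vertex set $\GV$ by keeping every edge $e\in A$ with its original orientation and replacing every edge $e\in B$ by its reversal. By definition, an alternating $s$-$t$ path in $G$ is exactly a directed $s$-$t$ path in $H$, so the statement reduces to showing that $t$ is reachable from $s$ in $H$.

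Suppose, for contradiction, that $t$ is not reachable from $s$ in $H$. Let $S\subseteq \GV$ be the set of vertices reachable from $s$ in $H$, so $s\in S$ and $t\notin S$. Consider the cut in $G$: let $\delta^+(S)$ denote the edges of $G$ going from $S$ to $\bar S:=\GV\setminus S$, and $\delta^-(S)$ those going from $\bar S$ to $S$. The key two observations are:
\begin{itemize}
\item If $e=(u,v)\in\delta^+(S)$ belonged to $A$, then in $H$ the edge $(u,v)$ would still be present and $v$ would be reachable from $s$, contradicting $v\in\bar S$. Hence $\delta^+(S)\subseteq B$.
\item If $e=(v,u)\in\delta^-(S)$ belonged to $B$, then in $H$ this edge is reversed to $(u,v)$ with $u\in S$, so $v$ would become reachable, contradicting $v\in\bar S$. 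Hence $\delta^-(S)\subseteq A$.
\end{itemize}

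Next I would apply flow conservation across the cut $S$ to the feasible $s$-$t$ flows $\zz$ and $\xx$, each of value $d>0$. Subtracting the two identities gives
\[
\sum_{e\in\delta^+(S)}(\zz_e-\xx_e) \;=\; \sum_{e\in\delta^-(S)}(\zz_e-\xx_e).
\]
Because $d>0$, the cut $\delta^+(S)$ must be non-empty (otherwise no flow could reach $t$). By the first observation and the working assumption that every $e\in B$ has $\zz_e<\xx_e$ strictly, the left-hand side is a sum of strictly negative terms and is therefore negative. By the second observation, every term on the right-hand side satisfies $\zz_e-\xx_e\ge 0$, so the right-hand side is nonnegative. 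This contradiction establishes that $t$ is reachable from $s$ in $H$, which is the desired alternating path.

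The only subtle point, and what I would flag as the main thing to get right, is the treatment of edges with $\zz_e=\xx_e$: those in $A$ contribute zero to the right-hand side (which is fine, since we only need nonnegativity), while the removal of the degenerate case $\zz_e=\xx_e=0$ (justified in the excerpt) ensures that every edge in $B$ satisfies the strict inequality $\zz_e<\xx_e$ needed to make the left-hand side strictly negative whenever $\delta^+(S)$ is non-empty. Everything else is bookkeeping with cuts and flow conservation.
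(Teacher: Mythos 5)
Your proof is correct and rests on the same core argument as the paper's: any $s$-$t$ cut must be crossed by a forward edge of $A$ or a reversed edge of $B$, established by subtracting the flow-conservation identities for $\zz$ and $\xx$ across the cut and using the strict inequality $\zz_e<\xx_e$ on $B$. Your packaging via the auxiliary graph $H$ and the reachability set $S$ is a slightly cleaner way to extract the actual path than the paper's incremental cut-growing construction, but it is not a genuinely different route.
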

\begin{proof}
We give a constructive proof that such a path exists by showing that any
$s$-$t$ cut in $G$ must have either a forward edge in $A$ or a backward edge in
$B$.  


We consider any $s$-$t$ cut defined by $S\subset \GV$ with $s \in S$ and prove that we can cross it with an edge in $A$ or a reverse edge in $B$.
%
%
Suppose the contrary, namely that
all edges incoming to $S$ are in $A$ and all edges outgoing from $S$ are in
$B$.  Denote by $\xx_A$ and $\zz_A$ the total incoming flow into $S$
corresponding to flow vectors $\xx$ and $\zz$, respectively, and by $\xx_B$ and
$\zz_B$ the total outgoing flows from $S$ respectively.  The definition of set
$A$ implies that $\xx_A \leq \zz_A$. Since conservation of flow imposes that
$\xx_B-\xx_A=\zz_B-\zz_A$, we have $\xx_B \leq \zz_B$.  On the other hand, from
the definition of $B$, either $\xx_B > \zz_B$ (note, we removed edges with $\xx_e = \zz_e = 0$),
which is a contradiction.

Now, starting with the cut $(S, G\backslash S)$ where $S = \{s\}$, we find an
appropriate edge crossing the cut and move both of its endpoints to $S$.
Thus, we add nodes to $S$ one by one until all nodes are in $S$, at
which point we will have a tree of forward and backward edges containing all
nodes. This tree yields an alternating path from the source to the
destination.
\end{proof}

We use the alternating path to provide an upper bound on the PRA that depends
on the number of times the alternating path switches from $A$ to $B$. To get
there, we need two lemmas. The first lemma extends
Corollary~\ref{cor-UBcostRAWE-var}, which applies to (standard) paths, to the
case of alternating paths. Note that it allows us to tighten the previous bound
by subtracting the latencies of the backward edges in the alternating path. The
lemma provides an upper bound on the social cost of the RAWE $\xx$ by
exploiting the equilibrium conditions on the subpaths $B_i$ on the alternating
path with respect to the risk-averse objective. 

\begin{lemma}\label{lemma1}
Consider an arbitrary graph with general latencies
and general variance functions that satisfy that the variance-to-mean ratio at
equilibrium is bounded by $\kappavar$. Letting $\altp$ be an alternating path,
the social cost $\CC(\xx)$ of a RAWE $\xx$ is upper bounded by
$$(1+\gamma\kappavar) \sum_{e\in A\cap \altp}\ell_e(\xx_e) -  \sum_{e\in B\cap \altp}\ell_e(\xx_e)\,.$$
\end{lemma}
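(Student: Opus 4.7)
The plan is to decompose $\altp$ into its maximal forward and backward runs $A_1\,B_1\,A_2\,B_2\cdots B_k\,A_{k+1}$ and then play the equilibrium condition on two families of $s$-$t$ paths against each other in a telescoping way. Because $s$ has no incoming edges and $t$ no outgoing edges, $\altp$ must begin and end with a forward run, and every edge of every $B_i$ carries positive $\xx$-flow by the definition of $B$. Throughout I will use the additivity of $\pathcost^\gamma$ over edges, Lemma~\ref{lemma-UBcostRAWE-var}, and the following fact: at a RAWE, any $s$-$t$ path whose edges all carry positive $\xx$-flow has $\pathcost^\gamma$ equal to the common flow-path cost $\lambda^*$. (This is the usual dual/node-potential identity: any positive-flow edge is tight for the potentials certifying the equilibrium, so its $\pathcost^\gamma$ equals the potential increment, and these telescope along any such $s$-$t$ path to give $\lambda^*$.)

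For the first family, for each backward run $B_i$ a flow-decomposition argument yields an $s$-$t$ path $\phi_i$ lying entirely in the positive-$\xx$-flow subgraph and containing the whole graph-direction subpath $B_i^{\mathrm{graph}}$ between its endpoints, which I call $\alpha_i$ (the source end in graph direction) and $\beta_i$ (the sink end in graph direction). Writing $\phi_i=C_i+B_i^{\mathrm{graph}}+D_i$ with $C_i$ an $s$-to-$\alpha_i$ prefix and $D_i$ a $\beta_i$-to-$t$ suffix, the node-potential fact gives
\[
\pathcost^\gamma(C_i)+\pathcost^\gamma(B_i)+\pathcost^\gamma(D_i)=\lambda^*,\qquad i=1,\dots,k.
\]

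For the second family, for each forward run $A_i$ the concatenation $\rho_i:=C_{i-1}+A_i+D_i$ is a genuine $s$-$t$ path (with boundary conventions $C_0=D_{k+1}=\emptyset$), since in graph direction $A_i$ runs from $\alpha_{i-1}$ to $\beta_i$. The equilibrium condition then gives $\pathcost^\gamma(\rho_i)\ge \lambda^*$, i.e.,
\[
\pathcost^\gamma(C_{i-1})+\pathcost^\gamma(A_i)+\pathcost^\gamma(D_i)\ge \lambda^*,\qquad i=1,\dots,k+1.
\]
Subtracting the equality for $\phi_i$ from this inequality at the same $i\le k$ cancels $\pathcost^\gamma(D_i)$ and yields the telescopic recursion $\pathcost^\gamma(C_i)\le \pathcost^\gamma(C_{i-1})+\pathcost^\gamma(A_i)-\pathcost^\gamma(B_i)$. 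Iterating from $C_0=\emptyset$ and then invoking the inequality at $i=k+1$ (where $D_{k+1}=\emptyset$) gives
\[
\lambda^*\le \pathcost^\gamma(C_k)+\pathcost^\gamma(A_{k+1})\le \sum_{i=1}^{k+1}\pathcost^\gamma(A_i)-\sum_{i=1}^{k}\pathcost^\gamma(B_i).
\]
Combining this with $\CC(\xx)\le\lambda^*$ (Lemma~\ref{lemma-UBcostRAWE-var} applied to any flow-carrying path), with $\pathcost^\gamma(A_i)\le (1+\gamma\kappavar)\sum_{e\in A_i}\ell_e(\xx_e)$ (variance-to-mean bound, as in Corollary~\ref{cor-UBcostRAWE-var}), and with $\pathcost^\gamma(B_i)\ge \sum_{e\in B_i}\ell_e(\xx_e)$ (since $\var\ge 0$), gives exactly the lemma's bound.

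The main obstacle is the flow-decomposition step producing $\phi_i$. If $B_i$ is a single edge, any flow path through that edge works and the identity $\pathcost^\gamma(\phi_i)=\lambda^*$ is immediate from equilibrium. When $B_i$ has several edges, one must argue that the positive-flow subgraph admits a simple $s$-$t$ path traversing all of $B_i^{\mathrm{graph}}$ (prepending and appending positive-flow routes to $s$ and to $t$, and using flow conservation to extract a decomposition that keeps $B_i^{\mathrm{graph}}$ intact as a single path), and then invoke the node-potential identity to upgrade ``$\pathcost^\gamma(\phi_i)\ge\lambda^*$'' to an equality. Once this delicate equality is established, the rest of the argument is pure linearity of $\pathcost^\gamma$ and telescoping.
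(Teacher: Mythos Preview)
Your proof is correct and follows essentially the same approach as the paper: both decompose $\altp$ into runs $A_i,B_i$, embed each $B_i$ in a flow-carrying path $C_iB_iD_i$, compare with the alternative $C_{i-1}A_iD_i$ via the equilibrium condition, and telescope the resulting recursion $\pathcost^\gamma(C_i)\le \pathcost^\gamma(C_{i-1})+\pathcost^\gamma(A_i)-\pathcost^\gamma(B_i)$ (the paper phrases this as a direct subpath comparison $C_k+B_k\le C_{k-1}+A_k$ rather than through $\lambda^*$, and also records the symmetric $D_k$ bound, but the logic is the same). Your flagged ``main obstacle'' is not a real obstacle: since every edge of $B_i^{\mathrm{graph}}$ carries positive $\xx$-flow, the node-potential identity you already invoke applies edge-by-edge and gives $\pathcost^\gamma(\phi_i)=\lambda^*$ immediately for any choice of positive-flow prefix $C_i$ and suffix $D_i$ (which exist by flow conservation), so you need neither simplicity of $\phi_i$ nor that it arise from a particular path decomposition.
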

\begin{proof}
Let us assume that the alternating path consists of subpaths
$A_1 B_1 A_2 \ldots A_{\eta-1} B_{\eta-1} A_\eta$,
where each subpath is in the corresponding set $A$ or $B$. Since by definition
each edge $e$ in $B_k$ carries flow ($\xx_e>0$) for any $k$, $e$ must belong to a flow-carrying $s$-$t$ path. Selecting a decomposition where the
whole subpath $B_k$ is on the same path (we have the freedom to do that since this
is a standard Wardrop model with respect to the mean-variance objective), there
must be a flow-carrying path that consists of subpaths $C_k B_k D_k$ where
$C_k$ originates at the source node and $D_k$ terminates at the destination
node (see Figure~\ref{fig:alternatingpath} for an illustration). We define
$C_0=D_n=\emptyset$. To simplify notation, only for the proof of this lemma, we are going to refer to the mean-variance cost of subpath $P$ also by
$P = \sum_{e\in P} (\ell_e(\xx_e)+\gamma \var_e(\xx_e) )$.

We next use the equilibrium conditions to derive bounds on $C_k$ and
$D_k$. Since the subpath $C_k B_k$ carries flow and the subpath $C_{k-1} A_k$
is an alternative route between the endpoints of $C_k B_k$, we have that $C_k +
B_k \leq C_{k-1} + A_k$ for all $k$.  Note that here and in what follows we critically use the additivity of the mean-variance cost.
Therefore,
\begin{align}
C_k &\leq C_{k-1} + A_k - B_k \label{ineq:ck}\\
 & \leq C_{k-2} + A_{k-1} + A_k - (B_{k-1} + B_k) \nonumber\\
 & \ldots \nonumber\\
 & \leq (A_1+A_2+\ldots+A_k) - (B_1+B_2+\ldots+B_k) \nonumber
\end{align}
Similarly, since $B_k D_k$ carries flow and $A_{k+1} D_{k+1}$ is an alternative
route between the same endpoints, we have that $B_k + D_k \leq A_{k+1} +
D_{k+1}$ for all $k$.
Therefore,
\begin{align}
D_k &\leq A_{k+1} + D_{k+1} - B_k \label{ineq:dk}\\
 & \leq A_{k+1} + A_{k+2} + D_{k+2} - (B_{k} + B_{k+1}) \nonumber\\
 & \ldots \nonumber\\
 & \leq (A_{k+1}+A_{k+2}+\ldots+A_n) - (B_{k} + B_{k+1} + \ldots+B_{n-1}) \nonumber
\end{align}
Then, for path $q=C_k B_k D_k$ for any $k$, we have that
\begin{align*}
\CC(\xx) = & \sum_{\route} \xx_{\route} \ell_{\route}(\xx) \\
\leq & \sum_{\route} \xx_{\route} (\ell_q(\xx) + \gamma\var_q(\xx) - \gamma\var_{\route}(\xx))
\qquad &&\text{ because either } \xx_{\route} = 0 \mbox{ or } \pathcost_{\route}(\xx) \leq \pathcost_q(\xx) \\
\le & C_k B_k D_k
\qquad &&\text{ after neglecting the negative term}\\
\le & (A_1+\ldots+A_n) - (B_1+ \ldots+B_{n-1})
\qquad &&\text{ using inequalities \eqref{ineq:ck} and \eqref{ineq:dk}}\\
\le&  \sum_{i=1}^n \sum_{e\in A_i} (\ell_{e}(\xx_e)+ \gamma\var_{e}(\xx_e))
	-\sum_{i=1}^{n-1} \sum_{e\in B_i}\ell_{e}(\xx_e)  \qquad &&\text{ neglecting variances in the negative term} \\
\le&  \sum_{i=1}^n \sum_{e\in A_i} (\ell_{e}(\xx_e)+ \gamma \kappavar \ell_{e}(\xx_e))
	-\sum_{i=1}^{n-1} \sum_{e\in B_i}\ell_{e}(\xx_e) \qquad &&\text{ applying the variability bound on the variances} .
\end{align*}
The claim follows.
\end{proof}

The previous result provided an upper bound for the RAWE $\xx$. Now, we complement it
with a lower bound for the RNWE $\zz$. Again, to get the result we exploit the
equilibrium conditions, now with respect to $\ell(\cdot)$.

\begin{lemma}\label{lemma2}
Consider a general instance with a single source-sink pair, general latencies
and general variance functions. Letting $\altp$ be an alternating path,
the social cost of a RNWE $\zz$ satisfies:
$$\CC(\zz)\ge \sum_{e\in A\cap \altp}\ell_e(\zz_e) -  \sum_{e\in B\cap \altp}\ell_e(\zz_e)\,.$$
\end{lemma}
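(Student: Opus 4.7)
The plan is to mirror the proof of Lemma~\ref{lemma1}, with the roles of $A$ and $B$ swapped and without the variance terms, since $\zz$ is a standard Wardrop equilibrium with respect to $\ell$. Writing the alternating path as $A_1 B_1 A_2 \ldots B_{\eta-1} A_\eta$, every edge of each subpath $A_k$ has $\zz_e > 0$ by the definition of $A$, so I will pick (possibly different per $k$) flow decompositions of $\zz$ in which the whole subpath $A_k$ lies on a single flow-carrying $s$-$t$ path $C'_k A_k D'_k$, where $C'_k$ goes from the source to the start of $A_k$ and $D'_k$ goes from the end of $A_k$ to the sink; in particular $C'_1 = D'_\eta = \emptyset$. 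Let $\ell^*$ denote the common latency of flow-carrying paths in the RNWE, so that $\CC(\zz) = \ell^*$ under the same unit-demand normalization used in Lemma~\ref{lemma1}.

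For each $k = 1,\ldots,\eta-1$, I would then apply the Wardrop condition for $\zz$ by comparing the flow-carrying path $C'_{k} A_{k} D'_{k}$, of cost $\ell^*$, with the alternative $s$-$t$ path $C'_{k+1} B_k D'_k$. This alternative is a legitimate directed path in the original graph: $C'_{k+1}$ ends at the tail of $A_{k+1}$, which is also the tail of $B_k$ in its original orientation; $B_k$ goes from that tail to the head of $A_k$; and $D'_k$ starts at the head of $A_k$. Canceling the shared subpath $D'_k$ in the Wardrop inequality gives
\begin{equation*}
\ell(C'_k) + \ell(A_k) \le \ell(C'_{k+1}) + \ell(B_k), \quad \text{i.e.,} \quad \ell(C'_{k+1}) - \ell(C'_k) \ge \ell(A_k) - \ell(B_k).
\end{equation*}

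Telescoping this inequality for $k=1,\ldots,\eta-1$ and using $\ell(C'_1) = 0$ produces $\ell(C'_\eta) \ge \sum_{k=1}^{\eta-1}\bigl[\ell(A_k) - \ell(B_k)\bigr]$. Adding $\ell(A_\eta)$ to both sides and noting that $D'_\eta = \emptyset$ yields the desired bound:
\begin{equation*}
\CC(\zz) \;=\; \ell^* \;=\; \ell(C'_\eta) + \ell(A_\eta) \;\ge\; \sum_{k=1}^{\eta}\ell(A_k) - \sum_{k=1}^{\eta-1}\ell(B_k) \;=\; \sum_{e\in A\cap\altp}\ell_e(\zz_e) - \sum_{e\in B\cap\altp}\ell_e(\zz_e).
\end{equation*}

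The main point to verify is the feasibility of the alternative path $C'_{k+1} B_k D'_k$ as a directed $s$-$t$ route; this reduces to the observation that in the original graph $B_k$ is oriented from the tail of $A_{k+1}$ to the head of $A_k$, so it glues correctly onto $C'_{k+1}$ and $D'_k$. In contrast to Lemma~\ref{lemma1}, no nonlinearity appears and the variability bound $\kappavar$ does not enter the argument, so the telescoping is cleaner.
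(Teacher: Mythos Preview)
Your proof is correct and follows essentially the same route as the paper's. The paper uses the same flow-carrying prefixes (indexed $C_{k-1}$ instead of your $C'_k$) and compares the $\zz$-used subpath $C_{k-1}A_k$ against the alternative $C_kB_k$ directly as $s$-to-(head of $A_k$) routes, whereas you phrase it with full $s$-$t$ paths $C'_kA_kD'_k$ versus $C'_{k+1}B_kD'_k$ and cancel the shared suffix $D'_k$; after that cancellation the inequalities and the telescoping are identical.
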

\begin{proof}
%
Since $\zz_e>0$ for any $e\in A_k$, there must be a subpath $C_{k-1}$ that
brings flow to $A_k$ (this $C_{k-1}$ need not be the same as that used in the
proof of Lemma~\ref{lemma1}). Then, there is a flow decomposition in which the
subpath $C_{k-1} A_k$ is used by $\zz$. Because subpath $C_k B_k$ is an
alternative route from $s$ to the node at the end of $A_k$, we must have that
$\ell_{C_{k-1}}(\zz)+\ell_{A_k}(\zz)\le \ell_{C_k}(\zz)+\ell_{B_k}(\zz)$.
Summing the previous inequalities for all $k$ (where $C_0$ is defined as an
empty path), we get
$$\ell_{C_{n-1}}(\zz)\ge \sum_{k=1}^{n-1}(\ell_{A_k}(\zz)-\ell_{B_k}(\zz))\,.$$
This proves the lemma because $\CC(\zz)= \ell_{C_{n-1}}(\zz)+
\ell_{A_{n}}(\zz)$, since $C_{n-1}A_n$ is a flow-carrying path for $\zz$.
\end{proof}

With the previous two lemmas that provided bounds for $\xx$ and $\zz$ and the
sets $A$ and $B$ that allow us to compare both flows, the proof of the main
result consists of just chaining the inequalities.

\begin{theorem}\label{thm:general}
Consider a general instance with a single source-sink pair, general latencies
and general variance functions that satisfy that the variance-to-mean ratio at
equilibrium is bounded by $\kappavar$. Letting $\altp$ be an alternating path,
the price of risk aversion is upper bounded by
$1 + \gamma\kappavar \eta$, where $\eta$ is the number of disjoint forward subpaths in the alternating path $\altp$.
In other words,
$$
\CC(\xx) \leq (1 + \gamma\kappavar \eta ) \CC(\zz).
$$
\end{theorem}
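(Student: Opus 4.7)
The plan is to chain together the three lemmas using the monotonicity of the latency functions as the link between the two equilibria. Fix an alternating path $\altp$ given by Lemma~\ref{lemma0} with decomposition $A_1 B_1 A_2 \cdots B_{\eta-1} A_\eta$, and let $L_A(y) := \sum_{e\in A\cap\altp} \ell_e(y_e)$ and $L_B(y) := \sum_{e\in B\cap\altp} \ell_e(y_e)$ for any flow $y$. Lemma~\ref{lemma1} gives
$$
\CC(\xx) \;\leq\; (1+\gamma\kappavar)\, L_A(\xx) - L_B(\xx),
$$
while Lemma~\ref{lemma2} gives $\CC(\zz) \geq L_A(\zz) - L_B(\zz)$.

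Next I would pass from $\xx$ to $\zz$ on the right-hand side of the first inequality using the definition of $A$ and $B$. Since latency functions are non-decreasing, $e\in A$ implies $\ell_e(\xx_e) \leq \ell_e(\zz_e)$ and $e\in B$ implies $\ell_e(\xx_e) \geq \ell_e(\zz_e)$, so $L_A(\xx) \leq L_A(\zz)$ and $-L_B(\xx) \leq -L_B(\zz)$. The bound on $\CC(\xx)$ becomes
$$
\CC(\xx) \;\leq\; (1+\gamma\kappavar)\, L_A(\zz) - L_B(\zz) \;=\; \bigl(L_A(\zz) - L_B(\zz)\bigr) + \gamma\kappavar\, L_A(\zz) \;\leq\; \CC(\zz) + \gamma\kappavar\, L_A(\zz),
$$
using Lemma~\ref{lemma2} in the last step. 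It therefore suffices to prove $L_A(\zz) \leq \eta\, \CC(\zz)$.

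To get that final bound I would show that each forward subpath $A_i$ has latency at most $\CC(\zz)$ under $\zz$. Each edge $e \in A_i$ satisfies $\zz_e > 0$ by definition of $A$, so by flow conservation we can extend $A_i$ backward from its start node to $s$ and forward from its end node to $t$ using only edges with positive $\zz$-flow, obtaining a flow-carrying $s$-$t$ path $C_{i-1} A_i D_i$ for $\zz$. The risk-neutral Wardrop equilibrium condition then forces $\ell_{C_{i-1}}(\zz) + \ell_{A_i}(\zz) + \ell_{D_i}(\zz) = \CC(\zz)$, and in particular $\ell_{A_i}(\zz) \leq \CC(\zz)$. Summing over the $\eta$ disjoint forward subpaths yields $L_A(\zz) = \sum_{i=1}^{\eta} \ell_{A_i}(\zz) \leq \eta\, \CC(\zz)$, and combining with the chain above delivers $\CC(\xx) \leq (1+\gamma\kappavar\eta)\,\CC(\zz)$.

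The only delicate step is the extension of $A_i$ to a flow-carrying $s$-$t$ path. The monotonicity substitution is immediate, and Lemmas~\ref{lemma1} and~\ref{lemma2} already encapsulate the equilibrium reasoning; the remaining work is a standard flow-conservation argument to guarantee that each forward subpath lives inside a full flow-carrying path at the RNWE so that the Wardrop condition can be applied to bound $\ell_{A_i}(\zz)$.
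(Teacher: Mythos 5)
Your proposal is correct and follows essentially the same route as the paper's proof: apply Lemma~\ref{lemma1}, pass from $\xx$ to $\zz$ edge by edge using monotonicity and the definitions of $A$ and $B$, invoke Lemma~\ref{lemma2}, and bound $\sum_{e\in A\cap\altp}\ell_e(\zz_e)$ by $\eta\,\CC(\zz)$ via flow-carrying paths through each forward subpath. Your explicit flow-conservation argument for extending each $A_i$ to a flow-carrying $s$--$t$ path at the RNWE is just a spelled-out version of the paper's remark that any path-flow decomposition is valid for the risk-neutral equilibrium.
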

\begin{proof}
\begin{align*}
\CC(\xx) \leq& (1+\gamma\kappavar) \sum_{e\in A\cap \altp}\ell_e(\xx) -  \sum_{e\in B\cap \altp}\ell_e(\xx) \qquad
	&&\text{ by Lemma~\ref{lemma1}}\\
	\leq& (1+\gamma\kappavar) \sum_{e\in A\cap \altp}\ell_e(\zz) -  \sum_{e\in B\cap \altp}\ell_e(\zz)
	&&\text{ by definition of $A$ and $B$}\\
	\leq& \CC(\zz) + \gamma\kappavar\sum_{e\in A\cap \altp}\ell_e(\zz) \qquad &&\text{ by Lemma~\ref{lemma2}}\\
	\leq& \CC(\zz) + \gamma\kappavar \eta \CC(\zz) \\
	=& \left(1 + \gamma\kappavar \eta \right) \CC(\zz)\,.
\end{align*}
Here, we have used that $\sum_{e\in A\cap \altp}\ell_e(\zz) \le \eta
\CC(\zz)$. This holds because for all forward subpaths $A_k \in \altp$, their
edges satisfy $\zz_e > 0$ so $\ell_{A_k}(\zz) \leq \ell_q(\zz) = \CC(\zz)$
for some path $q$ with $\zz_q>0$ that includes the subpath $A_k$. (The latter
holds because any path flow decomposition is valid for the risk-neutral
equilibrium.)
\end{proof}

The coefficient $\eta$, referred to in the introduction, is the maximum
possible number of disjoint forward subpaths. By way of construction, an
alternating path goes through every node at most once and the number of
forward subpaths is maximized when the path consists of alternating forward
and backward edges, for a total of at most $n-1$ edges.  Therefore $\eta
\le \lceil |\altp|/2 \rceil  \le \lceil (n-1)/2 \rceil $.

\begin{corollary}\label{cor:general}
The price of risk aversion in a general graph is upper bounded by
$1 + \gamma\kappavar \lceil (n-1)/2 \rceil$.
\end{corollary}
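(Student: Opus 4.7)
The plan is to invoke Theorem~\ref{thm:general} and then argue a purely combinatorial upper bound on the parameter $\eta$ in terms of the number of nodes $n$. By that theorem we already have $\CC(\xx) \le (1+\gamma\kappavar\,\eta)\,\CC(\zz)$, so it suffices to show that $\eta \le \lceil (n-1)/2 \rceil$ for some choice of alternating path $\altp$.

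First, I would observe that the alternating path $\altp$ produced by Lemma~\ref{lemma0} can be taken to be simple in the underlying undirected graph. This is essentially built into the construction in the proof of Lemma~\ref{lemma0}: nodes are added to $S$ one at a time by following an admissible forward or reverse-backward edge crossing the cut, so the resulting structure is a tree and extracting the $s$-$t$ route inside that tree yields a path that visits every vertex at most once. In particular, after reversing the edges of $B\cap\altp$, the resulting physical $s$-$t$ walk has no repeated vertices, and therefore uses at most $n-1$ edges, i.e.\ $|\altp|\le n-1$.

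Next, recall that $\altp$ is partitioned into maximal subpaths $A_1, B_1, A_2, B_2, \ldots, A_\eta$ that alternate between edges of $A$ and edges of $B$. Each such maximal subpath contains at least one edge, and forward and backward subpaths strictly alternate. Hence out of $|\altp|$ edges, at most $\lceil |\altp|/2\rceil$ of the maximal subpaths can be forward ones, giving
\[
\eta \;\le\; \left\lceil \frac{|\altp|}{2} \right\rceil \;\le\; \left\lceil \frac{n-1}{2} \right\rceil.
\]
Plugging this into the conclusion of Theorem~\ref{thm:general} yields $\CC(\xx)\le \bigl(1+\gamma\kappavar\lceil (n-1)/2\rceil\bigr)\CC(\zz)$, which is exactly the claim.

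The only mildly delicate point is the first step, namely confirming that the alternating path delivered by Lemma~\ref{lemma0} is simple so that $|\altp|\le n-1$; once that is in place, the remainder is a one-line counting argument on the alternation pattern of $A$- and $B$-subpaths. No additional assumptions on $\ell_e(\cdot)$ or $\var_e(\cdot)$ are needed here beyond those already used in Theorem~\ref{thm:general}.
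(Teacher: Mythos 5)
Your proposal is correct and follows essentially the same route as the paper: the paper likewise combines Theorem~\ref{thm:general} with the observation that the alternating path from Lemma~\ref{lemma0} visits each node at most once (hence has at most $n-1$ edges) and that forward and backward subpaths strictly alternate, giving $\eta \le \lceil |\altp|/2 \rceil \le \lceil (n-1)/2 \rceil$. Your added care in noting that simplicity of the path is what justifies $|\altp|\le n-1$ matches the paper's ``by way of construction'' remark, just spelled out more explicitly.
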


The bound depends on the three factors that one would expect (risk
aversion, variability and network size), but perhaps unexpectedly does so in
a linear way and for arbitrary delay and variance functions.
Applying Theorem~\ref{thm:general} to the Braess graph used in
Example~\ref{ex:braessVar} (see Figure~\ref{fig:pra-counterexample}), we can
see that for that family the bound provided by the main theorem is tight.

\begin{corollary}
The price of risk aversion among all instances whose topology is a Braess graph
is exactly $1 + 2\gamma\kappavar$.
\end{corollary}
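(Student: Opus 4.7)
My plan is to combine the lower bound already exhibited in Example~\ref{ex:braessVar} with an application of the general upper bound in Theorem~\ref{thm:general} (equivalently, Corollary~\ref{cor:general}) specialized to the Braess topology. Example~\ref{ex:braessVar} already furnishes a family of instances whose PRA tends to $1+2\gamma\kappavar$, so only the matching upper bound remains.

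For the upper bound, I would note that the Braess graph has exactly $n=4$ vertices (source $s$, the two middle nodes connected by the zigzag edge, and sink $t$), so Corollary~\ref{cor:general} immediately yields
\[
\CC(\xx)\ \le\ \bigl(1+\gamma\kappavar\lceil (n-1)/2\rceil\bigr)\,\CC(\zz)\ =\ (1+2\gamma\kappavar)\,\CC(\zz).
\]
Equivalently, since an alternating $s$-$t$ path in the Braess graph visits each node at most once, it has at most three edges, and hence the number $\eta$ of disjoint forward subpaths appearing in it is at most $\lceil 3/2\rceil = 2$. Plugging $\eta\le 2$ into Theorem~\ref{thm:general} gives the same conclusion.

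Putting both bounds together: every Braess instance satisfies $\CC(\xx)/\CC(\zz)\le 1+2\gamma\kappavar$, and Example~\ref{ex:braessVar} exhibits a sequence of instances on which this ratio is approached arbitrarily closely. Hence $\sup \CC(\xx)/\CC(\zz)=1+2\gamma\kappavar$, which is exactly the claimed price of risk aversion. The main (and essentially only) ``obstacle'' is simply verifying that the topology-dependent parameter $\eta$ indeed equals $2$ for Braess, which is immediate from the vertex count; there is no additional case analysis needed beyond invoking Theorem~\ref{thm:general} and recalling Example~\ref{ex:braessVar}.
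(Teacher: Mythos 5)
Your proposal is correct and follows essentially the same route as the paper: the paper's proof likewise notes that $n=4$ for the Braess graph, applies the general bound to get $1+\gamma\kappavar\lceil (n-1)/2\rceil = 1+2\gamma\kappavar$ (elaborating, as you do, that the only ``bad'' alternating path is $b,e^-,c$ with two disjoint forward subpaths, i.e.\ $\eta=2$), and matches it against the lower bound from Example~\ref{ex:braessVar}. No gaps.
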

\begin{proof}
The proof follows directly noting that $n = 4$ for the Braess graph.  Here,
we give more details on the alternating path and resulting PRA, to give more
insight into the problem. There are four possibilities for an alternating
path $\altp$: the three (regular) $s$-$t$ paths and the alternating path
consisting of two forward edges ($b$ and $c$ in the example) and one backward
edge ($e$ in the example). The first cases consist of forward edges only, so
when any of those paths is in $A$, the bound for the price of risk aversion
would be $1+\gamma\kappavar$. The `bad case' is when $\altp$ is the
alternating path with one backward edge. In that case, the alternating path
has two non-adjacent forward edges, providing the matching upper bound to the
example. This implies that the value of the PRA is exact.
\end{proof}

\begin{figure}[t]
\centerline{
\includegraphics[width=2.5in]{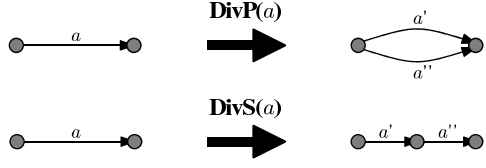}
}
\caption{Operations used to construct a series-parallel graph.}
\label{fig:serpar}
\end{figure}

Next, we derive that the price of
risk aversion in series-parallel graphs is at most $1+\gamma\kappavar$,
independently of the size of the network. Given
the lower bound provided by Example~\ref{ex:pigou} (a Pigou graph is
series-parallel), this bound must be tight. Series-parallel graphs are those
formed recursively by subdividing an edge in two subedges, or replacing an
edge by two parallel edges (see Figure~\ref{fig:serpar}). A noteworthy
alternative characterization is that a graph is series-parallel if and only
if it does not contain a Braess subgraph as an induced minor.

\begin{corollary}\label{cor:sp}
The price of risk aversion among all series-parallel instances
is exactly $1 + \gamma\kappavar$.
\end{corollary}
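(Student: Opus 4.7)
The plan is to combine the upper bound on the RAWE social cost from Corollary~\ref{cor-UBcostRAWE-var} with Theorem~\ref{thm:SP}---the forthcoming result that in a series-parallel (SP) graph a Wardrop equilibrium \emph{maximizes} the shortest-path length among all feasible flows. The matching lower bound $PRA \ge 1+\gamma\kappavar$ is immediate because the two-link Pigou network of Example~\ref{ex:pigou} is series-parallel, so only the upper bound $\CC(\xx)\le (1+\gamma\kappavar)\CC(\zz)$ remains to be shown.

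First, I would apply Corollary~\ref{cor-UBcostRAWE-var} to the path $p^{\star}\in\paths$ that minimizes $\ell_p(\xx)$ among all $s$-$t$ paths (the corollary explicitly allows any path, flow-carrying or not). This gives
\[
\CC(\xx) \;\le\; (1+\gamma\kappavar)\,\ell_{p^{\star}}(\xx) \;=\; (1+\gamma\kappavar)\,\min_{p\in\paths}\ell_p(\xx).
\]
Second, because $\zz$ is a standard Wardrop equilibrium for the deterministic latencies $\ell_e(\cdot)$ on an SP graph, Theorem~\ref{thm:SP} applied with the flow $\xx$ yields $\min_{p\in\paths}\ell_p(\xx)\le \min_{p\in\paths}\ell_p(\zz)$. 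Finally, at the RNWE $\zz$ every flow-carrying path has common latency equal to the shortest-path length, so $\CC(\zz)$ is proportional to $\min_{p\in\paths}\ell_p(\zz)$ by the same total-demand factor that implicitly multiplies $\ell_{p^{\star}}(\xx)$ on the right-hand side of the first inequality; this factor cancels in the ratio and chaining the three estimates gives $\CC(\xx)\le (1+\gamma\kappavar)\CC(\zz)$, as claimed.

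The only nontrivial ingredient here is Theorem~\ref{thm:SP}, whose proof is the actual obstacle---everything else collapses into a short chain of inequalities. An alternative route would go through Theorem~\ref{thm:general} by exhibiting, in every SP graph, an alternating path $\altp$ whose backward portion $B\cap \altp$ is empty so that $\eta=1$; this would call for an induction on the series/parallel decomposition of $G$ (with the awkward case being a parallel merge, where the demand is split between two subgraphs in a flow-dependent way) and would in effect re-prove Theorem~\ref{thm:SP} in the language of alternating paths, so the approach via the shortest-path maximization property is the cleaner one.
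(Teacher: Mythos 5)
Your proof is correct, but it takes a genuinely different route from the one the paper uses for this corollary. The paper stays inside the alternating-path framework of Section~\ref{sec:general}: it argues that in a series-parallel graph a minimal alternating path cannot contain a backward edge, because following a reverse edge $e^-$ without returning to its head would embed a Braess graph as a minor, contradicting the forbidden-minor characterization of SP graphs; hence there is an alternating path with $\altp\subseteq A$, $\eta=1$, and Theorem~\ref{thm:general} gives the $1+\gamma\kappavar$ upper bound directly. (So your side remark that the alternating-path route ``would call for an induction on the series/parallel decomposition'' is not quite how the paper does it --- the minimality-plus-Braess-minor argument avoids the induction, and the awkward parallel-merge case you anticipate never arises.) Your route instead chains Corollary~\ref{cor-UBcostRAWE-var} applied to the path minimizing $\ell_p(\xx)$ with the shortest-path-maximization property of Theorem~\ref{thm:SP}; this is precisely the ``alternative proof of independent interest'' that the paper itself presents later via Theorem~\ref{ub:rho}, stated there for the mean-stdev objective but explicitly noted to hold for mean-var as well. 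The trade-off: the paper's argument is short, purely combinatorial, and the structural fact it establishes (a forward-only alternating path exists in SP graphs) is reused verbatim to get the mean-stdev extension in Corollary~\ref{cor:spStdev}; your argument leans on the heavier Theorem~\ref{thm:SP} (proved by induction on the SP decomposition), but in exchange it transfers immediately to any risk objective for which the analogue of Corollary~\ref{cor-UBcostRAWE-var} holds, and it yields the independently interesting characterization of Wardrop equilibria as shortest-path maximizers. Both proofs get the matching lower bound identically from the Pigou instance of Example~\ref{ex:pigou}. One small point worth tightening: the cancellation of the ``total-demand factor'' in your last step is cleanest if you simply normalize to unit demand, as the paper implicitly does (the proof of Lemma~\ref{lemma-UBcostRAWE-var} already uses $\sum_q \xx_q = 1$), so that $\CC(\zz)=\min_{p}\ell_p(\zz)$ exactly.
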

\begin{proof}
We are going to prove that there exists an alternating path $\altp$ consisting only
of forward edges, so $\altp\subseteq A$. Let us consider a minimal
(cardinality-wise) alternating path with a backward edge. The key property of
series-parallel graphs is that after taking a reverse edge $e^-$, where
$e=(i,j)\in\GA$, $\altp$ has to either come back to node $j$ or close a loop with
itself. If that did not happen, it would imply that a Braess graph is embdeded
in the instance, which is not possible. Hence, there is an alternating path
$\altp'$ without the reverse edge $e^-$, which is a contradiction to the minimality
of~$\altp$.
\end{proof}

\section{Representing Risk as the Standard Deviation}

We now consider a related risk measure based on the standard deviation rather
than the variance. The objective that each user seeks to minimize is a linear
combination of the expectation and the standard deviation along a route.  We
call that the mean-standard deviation ({\em mean-stdev}) objective. For
simplicity, in the rest of this section we refer to the mean-stdev objective as
the risk-averse cost along a route. Formally, the mean-stdev cost along route
$\route$ is
\begin{equation}\label{eqn:pathcost-stdev}
\pathcost^{\gamma}_{\route}(\flow) =\sum_{e\in\route} {\ell_e (f_e)} +\gamma
\sqrt{\sum_{e\in\route} {\std_e(f_e)^2}}\,.
\end{equation}
For this objective, equilibrium existence follows from a variational inequality
formulation if standard deviation functions $\std(x)$ are continuous, as we have assumed here
\cite{nikolova-stochWE}.

Example~\ref{ex:pigou} and Remark~\ref{remark:pigou} for the mean-var model can
be adapted here, replacing the variances with standard
deviations in the example specification. Since for arbitrary instances the PRA
is unbounded, we assume that $\std_e(\xx_e)/\ell_e(\xx_e)$ is no more than a fixed
constant $\kappastdev$ for all $e\in \GA$ at the RAWE $\xx_e\in \R_+$ (this is
less restrictive than requiring such a bound for all feasible flows). This
means that the standard deviation cannot be larger than $\kappastdev$ times the
expected latency in any edge at the equilibrium flow.

We start by identifying which results extend from the mean-var model to the
mean-stdev model here.   Essentially all lemmas extend, except for
Lemma~\ref{lemma1}.  Proving this lemma is thus the only remaining roadblock to
proving the equivalent of Theorem~\ref{thm:general} in the case of the
mean-stdev cost, namely establishing a price of risk aversion bound for general
graphs.

For completeness, we restate the lemmas and some of the proofs that require a slight modification.
%
By the definition of equilibrium, the cost $\CC(\zz)$ of a RNWE can be
bounded by the latency $\ell_\route(\zz)$ of an arbitrary path $\route$, and both
are equal if $\zz_\route>0$. We now extend that argument to a RAWE $\xx$.
We prove that its total cost is bounded by the expected latency of an
arbitrary path, blown up by a constant that depends on the risk-aversion
coefficient and the maximum coefficient of variation.

\begin{lemma}\label{lemma-UBcostRAWE-stdev}
Consider a general instance with a single source-sink pair and general
latencies and standard deviation functions. Letting $\route\in\paths$ denote an
arbitrary path (potentially not carrying flow at equilibrium), the social cost
of a RAWE $\CC(\xx)$ is bounded by the path cost $\pathcost_\route(\xx)$.
\end{lemma}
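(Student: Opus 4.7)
The plan is to mirror the proof of Lemma~\ref{lemma-UBcostRAWE-var} essentially verbatim, with the variance $\var_q$ replaced by the standard deviation $\sigma_q$. The argument given for the variance case never actually used additivity of the risk term over the edges of a path; it used only two ingredients: (i) the per-path equilibrium inequality, and (ii) the fact that one may discard a nonnegative term of the form $\gamma \sum_q \xx_q (\text{risk of } q)$. Both remain available in the mean-stdev setting because Definition~\ref{defi:eq} applies to the mean-stdev cost $\pathcost^{\gamma}_q(\xx) = \ell_q(\xx) + \gamma \sigma_q(\xx)$ exactly as stated.

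Concretely, I fix an arbitrary $s$-$t$ path $\route$. For every path $q \in \paths$ with $\xx_q > 0$, the equilibrium condition gives
$$\ell_q(\xx) + \gamma \sigma_q(\xx) \;\le\; \ell_\route(\xx) + \gamma \sigma_\route(\xx),$$
so $\ell_q(\xx) \le \ell_\route(\xx) + \gamma \sigma_\route(\xx) - \gamma \sigma_q(\xx)$. Paths with $\xx_q = 0$ contribute nothing to $\CC(\xx)$, so I may multiply this inequality by $\xx_q$ and sum over all $q \in \paths$:
$$\CC(\xx) \;=\; \sum_{q \in \paths} \xx_q \ell_q(\xx) \;\le\; \sum_{q \in \paths} \xx_q \bigl[\ell_\route(\xx) + \gamma \sigma_\route(\xx) - \gamma \sigma_q(\xx)\bigr].$$
Since $\ell_\route(\xx)$ and $\sigma_\route(\xx)$ are constants with respect to $q$, they pull out of the sum, and the remaining term $-\gamma \sum_q \xx_q \sigma_q(\xx)$ is nonpositive and can be dropped, yielding $\CC(\xx) \le \ell_\route(\xx) + \gamma \sigma_\route(\xx) = \pathcost_\route(\xx)$, as desired. (Here I am following the paper's normalization $\sum_q \xx_q = 1$ that is implicit in the analogous mean-var derivation.)

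The main obstacle one might fear in moving from mean-var to mean-stdev, namely the non-additivity of $\sigma$ across edges, does not arise in this lemma. The reason is that the bound compares the whole social cost against the scalar quantity $\sigma_\route(\xx)$ along a single fixed path $\route$, without ever splitting $\sigma_\route$ into per-edge contributions. The difficulty flagged by the authors only surfaces later, in the telescoping argument of Lemma~\ref{lemma1}, where subpaths $C_k, B_k, A_k, D_k$ must be recombined and additivity of the risk term is essential. Here the proof needs just one application of the equilibrium inequality per flow-carrying path, which the mean-stdev cost supports directly.
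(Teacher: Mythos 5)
Your proof is correct and is essentially the paper's own argument: the paper omits an explicit proof of Lemma~\ref{lemma-UBcostRAWE-stdev} precisely because it is the proof of Lemma~\ref{lemma-UBcostRAWE-var} verbatim with $\var_q$ replaced by $\sigma_q$, using only the per-path equilibrium inequality and the discarding of the nonnegative term $\gamma\sum_q \xx_q\sigma_q(\xx)$. Your remark that non-additivity of the standard deviation is irrelevant here (and only bites later, in Lemma~\ref{lemma1}) matches the authors' own discussion.
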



\begin{corollary}\label{cor-UBcostRAWE-stdev}
Consider a general instance with a single source-sink pair, general latencies
and general standard deviation functions that satisfy that the coefficient of
variation at equilibrium is bounded by $\kappastdev$. Letting $\route\in\paths$
denote an arbitrary path (potentially not carrying flow at equilibrium), the
social cost $\CC(\xx)$ of a RAWE $\xx$ is bounded by
$(1+\gamma\kappastdev)\ell_{\route}(\xx)$.
\end{corollary}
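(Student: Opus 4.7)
The plan is to derive this corollary as a direct consequence of Lemma \ref{lemma-UBcostRAWE-stdev}, by bounding the mean-stdev cost $\pathcost_\route(\xx) = \ell_\route(\xx) + \gamma\sigma_\route(\xx)$ using only the coefficient-of-variation hypothesis $\std_e(\xx_e) \le \kappastdev\,\ell_e(\xx_e)$ for every edge $e \in \GA$. So the work reduces to proving the single inequality $\sigma_\route(\xx) \le \kappastdev\,\ell_\route(\xx)$.

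To that end, I would first substitute the edge-wise bound into the definition of the path standard deviation to get
\begin{equation*}
\sigma_\route(\xx) \;=\; \sqrt{\sum_{e\in\route}\std_e(\xx_e)^2} \;\le\; \kappastdev\sqrt{\sum_{e\in\route}\ell_e(\xx_e)^2}.
\end{equation*}
The key inequality is then the elementary fact that for non-negative reals $a_1,\dots,a_r$ one has $\sqrt{\sum_i a_i^2}\le \sum_i a_i$, which follows by squaring both sides and dropping the non-negative cross terms $2\sum_{i<j}a_i a_j$. Applying this with $a_e = \ell_e(\xx_e)\ge 0$ yields $\sqrt{\sum_{e\in\route}\ell_e(\xx_e)^2} \le \sum_{e\in\route}\ell_e(\xx_e) = \ell_\route(\xx)$, and hence $\sigma_\route(\xx) \le \kappastdev\,\ell_\route(\xx)$.

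Combining these pieces gives $\pathcost_\route(\xx) = \ell_\route(\xx) + \gamma\sigma_\route(\xx) \le (1+\gamma\kappastdev)\,\ell_\route(\xx)$, and Lemma \ref{lemma-UBcostRAWE-stdev} then yields $\CC(\xx) \le \pathcost_\route(\xx) \le (1+\gamma\kappastdev)\,\ell_\route(\xx)$, which is exactly the claim.

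I do not anticipate any real obstacle: the entire argument is mechanical once Lemma \ref{lemma-UBcostRAWE-stdev} is in hand, and the non-additivity of the standard deviation (which was a genuine technical nuisance in Lemma \ref{lemma1}) here only helps us, since $\sqrt{\sum a_e^2}$ is dominated by $\sum a_e$. The only subtle point worth flagging is that the hypothesis is imposed at the equilibrium flow $\xx_e$ rather than for all flows, which is exactly what is needed to substitute into $\std_e(\xx_e)$; this is consistent with the analogous mean-variance assumption used in Corollary \ref{cor-UBcostRAWE-var}.
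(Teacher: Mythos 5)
Your proof is correct and follows essentially the same route as the paper: both arguments combine the edge-wise bound $\std_e(\xx_e)\le\kappastdev\,\ell_e(\xx_e)$ with the norm inequality $\|\cdot\|_2\le\|\cdot\|_1$ for nonnegative vectors, the only (immaterial) difference being that the paper applies the norm inequality to the standard deviations first and then the coefficient-of-variation bound, while you apply them in the reverse order. Your closing observation that the hypothesis is needed only at the equilibrium flow $\xx$ matches the paper's setup exactly.
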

\begin{proof}
From Lemma~\ref{lemma-UBcostRAWE-stdev},
\begin{align*}
\CC(\xx)
\le \pathcost_\route(\xx)
=& \ell_\route(\xx) + \gamma  \sqrt{ \sum_{a \in \route}\std_a^2(\xx) } \\
\leq& \ell_\route(\xx) + \gamma  \sum_{a \in \route}\std_a(\xx) \\
\leq& \ell_\route(\xx) + \gamma  \sum_{a \in \route}\kappastdev\ell_a(\xx)
= \ell_\route(\xx) (1+\gamma \kappastdev)\,.
\end{align*}
Here, we have used
that $||x||_2\le ||x||_1$ for an arbitrary nonnegative vector, and applied the bound of
$\kappastdev$ on the coefficient of variation.
\end{proof}

As in the mean-var case, to get the tightest possible upper bound, one would consider the path with smallest standard deviation induced by the RAWE. Selecting that path, we can get rid of the factor and bound the total cost by the expected latency of the path.

\begin{lemma}\label{thm-UBcostRAWE2-stdev}
Consider a general instance with a single source-sink pair, general latencies
and general standard deviation functions. Letting $\route\in\paths$ denote the path that
minimizes the standard deviation under a RAWE $\xx$ (where the path $\route$ may or may not
carry flow), the social cost $\CC(\xx)$ is bounded by $\ell_{\route}(\xx)$.
\end{lemma}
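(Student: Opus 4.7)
The plan is to mimic the proof of Lemma~\ref{thm-UBcostRAWE2-var}, since the argument there relied only on the linearity of the expectation, the equilibrium inequality, and the fact that the risk term (the variance) was being minimized along a chosen path. Standard deviation shares the last two properties, so the only thing to check is that the intermediate ``first inequality'' in the proof of Lemma~\ref{lemma-UBcostRAWE-stdev} still produces a telescoping form when the chosen path $\route$ minimizes $\sigma_q(\xx)$ over all $q\in\paths$.

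Concretely, I would first replay the opening step of Lemma~\ref{lemma-UBcostRAWE-stdev}. Using that $\ell_q(\xx)+\gamma\sigma_q(\xx)\le \ell_\route(\xx)+\gamma\sigma_\route(\xx)$ holds for every flow-carrying path $q$ (and trivially for zero-flow paths after multiplying by $\xx_q=0$), we obtain
\[
\CC(\xx)\;=\;\sum_{q\in\paths}\xx_q\ell_q(\xx)\;\le\;\sum_{q\in\paths}\xx_q\bigl[\ell_\route(\xx)+\gamma\sigma_\route(\xx)-\gamma\sigma_q(\xx)\bigr]
\;=\;\ell_\route(\xx)+\gamma\sigma_\route(\xx)-\gamma\sum_{q\in\paths}\xx_q\sigma_q(\xx),
\]
where the final equality uses the demand normalization $\sum_{q\in\paths}\xx_q=1$ that is implicit in the analogous mean-variance statement.

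Next I would invoke the defining property of $\route$: since $\route=\arg\min\{\sigma_q(\xx):q\in\paths\}$, we have $\sigma_\route(\xx)\le\sigma_q(\xx)$ for every $q\in\paths$. Because $\xx$ is a nonnegative flow summing to~$1$, this yields $\sum_{q\in\paths}\xx_q\sigma_q(\xx)\ge \sigma_\route(\xx)$, so the right-hand side of the display collapses to $\ell_\route(\xx)$, giving the desired bound $\CC(\xx)\le \ell_\route(\xx)$.

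There is no real obstacle here: the proof deliberately avoids any use of additivity of the risk term along paths, which is the feature that was lost when switching from variance to standard deviation. The only subtlety worth flagging is that, unlike in Lemma~\ref{lemma1}, we never need to decompose the flow along particular subpaths or rely on path additivity of $\sigma$; we only use the scalar inequality $\sigma_\route\le \sigma_q$ at the path level. This is precisely why this lemma carries over intact while Lemma~\ref{lemma1} does not, and it is why the authors single out Lemma~\ref{lemma1} (not this one) as the genuine roadblock in the mean-stdev setting.
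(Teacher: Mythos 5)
Your proposal is correct and follows essentially the same route as the paper: the paper states this lemma without a separate proof, intending the direct adaptation of the proof of Lemma~\ref{thm-UBcostRAWE2-var}, which is exactly what you carry out (equilibrium inequality, then $\sum_{q}\xx_q(\sigma_\route(\xx)-\sigma_q(\xx))\le 0$ by minimality of $\sigma_\route$). Your closing observation that no path-additivity of $\sigma$ is needed, which is why this lemma survives the switch to standard deviations while Lemma~\ref{lemma1} does not, is also exactly the point the paper makes.
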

Let us note that an alternating path exists using the same proof as that of Lemma~\ref{lemma0}, even though the corresponding sets $A$ and $B$ will be different reflecting that $\xx$ is now a RAWE for the mean-stdev cost.
We now prove a special case of Theorem~\ref{thm:general} for the mean-stdev model, namely that if we can find an alternating path in the set $A$, then PRA in the mean-stdev model is $1+\gamma\kappastdev$.

\begin{theorem}\label{thm:stdev}
Consider a general instance with a single source-sink pair, general latencies
and general standard deviation functions that satisfy that the stdev-to-mean
ratio at equilibrium is bounded by $\kappastdev$.  Suppose there exists an
alternating path $\route$ consisting of forward edges only (namely edges in the set
$A = \{e\in \GA \,|\, \zz_e \geq \xx_e \mbox{ and } \zz_e > 0 \} $.
Then, the price of risk aversion is upper bounded by
$1 + \gamma\kappastdev$.
In other words,
$$
\CC(\xx) \leq \left(1 + \gamma\kappastdev\right) \CC(\zz).
$$
\end{theorem}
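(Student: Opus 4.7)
The plan is to chain three bounds to pass from $\CC(\xx)$ to $\CC(\zz)$, exploiting the hypothesis that the alternating path $\route$ has no backward subpaths so that the negative stdev terms which would otherwise obstruct the argument never need to be handled.

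First, I would apply Corollary~\ref{cor-UBcostRAWE-stdev} to the specific path $\route$ (which need not carry flow under $\xx$) to obtain $\CC(\xx) \leq (1 + \gamma\kappastdev)\,\ell_\route(\xx)$. This step is purely a citation: the corollary is already proved independently of which particular path one plugs in.

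Second, because every edge $e$ of $\route$ lies in $A$, the definition of $A$ gives $\xx_e \leq \zz_e$, and the monotonicity of the latency functions yields $\ell_e(\xx_e) \leq \ell_e(\zz_e)$ for each such $e$; summing along $\route$ gives $\ell_\route(\xx) \leq \ell_\route(\zz)$.

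Third, I would invoke Lemma~\ref{lemma2} in the special case $B \cap \route = \emptyset$, in which its right-hand side collapses to $\ell_\route(\zz)$, yielding $\CC(\zz) \geq \ell_\route(\zz)$. Although Lemma~\ref{lemma2} is stated under the mean-variance model, its proof never references the variance at all: it uses only the risk-neutral equilibrium conditions on $\zz$ as a standard Wardrop equilibrium with respect to $\ell$, together with the fact that each edge of an $A$-subpath carries positive $\zz$-flow. Both hypotheses remain valid when $A$ and $B$ are redefined relative to the mean-stdev RAWE $\xx$, so the proof carries over verbatim. Chaining the three inequalities gives $\CC(\xx) \leq (1+\gamma\kappastdev)\,\ell_\route(\xx) \leq (1+\gamma\kappastdev)\,\ell_\route(\zz) \leq (1+\gamma\kappastdev)\,\CC(\zz)$, which is the claim.

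No genuine obstacle is encountered under this hypothesis; the real difficulty, which this theorem sidesteps, is the one flagged in the text: extending Theorem~\ref{thm:general} to the mean-stdev setting for general alternating paths requires a stdev analog of Lemma~\ref{lemma1}, whose telescoping argument on the subpaths $C_k B_k D_k$ leans on the additivity of the mean-variance cost across concatenated paths. That additivity is destroyed by the square root in the mean-stdev objective, which is precisely why the fully general mean-stdev bound is established only topology by topology (e.g., on Braess and series-parallel graphs) rather than by a single clean extension of Theorem~\ref{thm:general}.
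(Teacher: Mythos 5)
Your proposal is correct and follows essentially the same route as the paper's own proof: apply Corollary~\ref{cor-UBcostRAWE-stdev} to the forward-only alternating path $\route$, use $\xx_e \leq \zz_e$ on $A$ together with monotonicity of the latencies, and close with Lemma~\ref{lemma2} (observing, as the paper does, that it concerns only the RNWE and is therefore insensitive to the risk model). Your closing remarks about why the general mean-stdev extension fails also match the paper's discussion, so there is nothing to add.
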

\begin{proof}
Since $\route\subseteq A$ is composed of only forward edges, it is an $s$-$t$ path,
which allows us to apply Corollary~\ref{cor-UBcostRAWE-stdev}.
Lemma~\ref{lemma2} also holds since it concerns the RNWE $\zz$, which does not
depend on how risk aversion is captured by the model.
The rest of the proof follows as before:
\begin{align*}
\CC(\xx) \leq& (1+\gamma\kappastdev) \sum_{e\in \route}\ell_e(\xx_e)  \quad
	&&\text{ by Corollary~\ref{cor-UBcostRAWE-stdev}}\\
	\leq& (1+\gamma\kappastdev) \sum_{e\in \route}\ell_e(\zz_e)
	&&\text{ by monotonicity of the latency functions, and since path $\route \subseteq A$}\\
	\leq& \left(1 + \gamma\kappastdev\right) \CC(\zz) \qquad &&\text{ by Lemma~\ref{lemma2}} \,.
\end{align*}
\end{proof}

Extending Lemma~\ref{lemma1} to the mean-stdev objective for general graphs
remains elusive. The proof for the mean-var objective relies on the equilibrium
conditions on subpaths of the RAWE $\xx$. Although the mean-var objective leads to a
separable model, the mean-stdev one does not (for details, we refer the reader
to \cite{nikolova-stochWE}), and that complicates a general proof. Moreover, as
an additional complication, in the variance case we use a flow decomposition
that suits our needs but for the standard deviation case, the decompositions
cannot be arbitrary \cite{nikolova-stochWE} so we cannot guarantee that the one we need is valid. As
some preliminary steps to a general proof, the next two sections provide tight
bounds for PRA in two well-studied families of graphs: Braess networks and
series-parallel networks.

\subsection{PRA in the Braess Network}\label{sec:braess-mean-var}

In this section we consider Braess paradox networks (see
Figure~\ref{fig:pra-counterexample}) with a unit demand.
We use the same notation as within Example~\ref{ex:braessVar}.


%

We will show that a slight modification of our proof for the mean-variance
cost function in general graphs can provide a proof for the mean-stdev
cost function in the Braess graph, and we comment on the challenge of
extending such a proof to a mean-stdev cost function in general graphs.
%
In what follows, we will prove a version of Lemma~\ref{lemma1} for the
mean-stdev in the Braess graph and consequently we will get that the PRA is
$1+2\gamma\kappastdev$. We start with an auxiliary lemma. As before, we
denote the top path by $p$, consisting of edges $(a,b)$; the bottom path by
$q$, consisting of edges $(c,d)$, and the zigzag path by $r$, consisting of
edges $(a,e,d)$.

\begin{lemma}\label{lemmaBraess}
For an arbitrary flow $f$ in a Braess network,
$\std_\braessp(f)+ \std_\braessq(f)-\std_\braessr(f)\le \std_b(f_b)+\std_c(f_c)$ when $\std_\braessr(f)\le \max(\std_\braessp(f),\std_\braessq(f))$.
\end{lemma}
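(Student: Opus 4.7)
The plan is to exploit a natural symmetry in the configuration and then execute a careful rationalization that reduces everything to a single auxiliary quantity $\sqrt{\std_d^2(f_d)+\std_e^2(f_e)}$. First, note that the inequality is symmetric under swapping paths $\braessp$ and $\braessq$ (which corresponds to swapping edges $a \leftrightarrow d$ and $b \leftrightarrow c$, and leaves the zigzag $\braessr$ invariant up to reversal). Hence I may assume without loss of generality that $\std_{\braessp}(f) \ge \std_{\braessq}(f)$, so the hypothesis $\std_{\braessr}(f) \le \max(\std_{\braessp}(f),\std_{\braessq}(f))$ reduces to $\std_{\braessr}(f) \le \std_{\braessp}(f)$. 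Squaring and cancelling the shared $\std_a^2(f_a)$ from the path formulas yields the useful inequality $\std_b(f_b) \ge \sqrt{\std_d^2(f_d)+\std_e^2(f_e)}$.

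The heart of the proof is a bound on $\std_{\braessp}(f) - \std_b(f_b)$. Rationalizing gives $\std_{\braessp}(f) - \std_b(f_b) = \std_a^2(f_a)/(\std_{\braessp}(f) + \std_b(f_b))$. By the case assumption together with the inequality on $\std_b(f_b)$ just derived, the denominator is at least $\std_{\braessr}(f) + \sqrt{\std_d^2(f_d)+\std_e^2(f_e)}$. The key observation is that reverse rationalization, using the identity $\std_{\braessr}^2(f) - \std_d^2(f_d) - \std_e^2(f_e) = \std_a^2(f_a)$, shows that the resulting quotient $\std_a^2(f_a)/(\std_{\braessr}(f) + \sqrt{\std_d^2(f_d)+\std_e^2(f_e)})$ is precisely $\std_{\braessr}(f) - \sqrt{\std_d^2(f_d)+\std_e^2(f_e)}$. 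Therefore $\std_{\braessp}(f) - \std_b(f_b) \le \std_{\braessr}(f) - \sqrt{\std_d^2(f_d)+\std_e^2(f_e)}$.

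To close the argument, a short direct computation gives $\std_{\braessq}(f) \le \std_c(f_c) + \sqrt{\std_d^2(f_d)+\std_e^2(f_e)}$: squaring both sides and cancelling $\std_c^2(f_c)+\std_d^2(f_d)$ leaves $0 \le 2\std_c(f_c)\sqrt{\std_d^2(f_d)+\std_e^2(f_e)} + \std_e^2(f_e)$, which is immediate. Adding the two displayed bounds causes the $\sqrt{\std_d^2(f_d)+\std_e^2(f_e)}$ term to telescope, producing exactly $\std_{\braessp}(f)+\std_{\braessq}(f)-\std_{\braessr}(f) \le \std_b(f_b)+\std_c(f_c)$, as required. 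The main subtlety is the reverse-rationalization step, which hinges on the numerical coincidence $\std_{\braessp}^2(f) - \std_b^2(f_b) = \std_{\braessr}^2(f) - \std_d^2(f_d) - \std_e^2(f_e) = \std_a^2(f_a)$ (both equal because paths $\braessp$ and $\braessr$ share edge $a$); this is what allows the denominator comparison to pass through losslessly, avoiding the slack that ruins cruder approaches based on $\sqrt{x+y}\le\sqrt{x}+\sqrt{y}$.
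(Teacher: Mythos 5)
Your proof is correct, and it takes a genuinely different algebraic route from the paper's. The paper proves the inequality by squaring twice, reducing everything to a degree-four polynomial inequality in the edge standard deviations in which every term on the right-hand side is nonnegative except for the single comparison $\sigma_a^2\sigma_d^2 \le \sigma_a^2\sigma_b^2$, which follows from the hypothesis via $\sigma_b \ge \sigma_d$. You instead rationalize: writing $\sigma_p(f) - \sigma_b(f_b) = \sigma_a^2(f_a)/(\sigma_p(f)+\sigma_b(f_b))$, using the stronger consequence $\sigma_b(f_b) \ge \sqrt{\sigma_d^2(f_d)+\sigma_e^2(f_e)}$ of the case hypothesis $\sigma_r(f) \le \sigma_p(f)$ to bound the denominator below by $\sigma_r(f) + \sqrt{\sigma_d^2(f_d)+\sigma_e^2(f_e)}$, and recognizing the resulting quotient as exactly $\sigma_r(f) - \sqrt{\sigma_d^2(f_d)+\sigma_e^2(f_e)}$ thanks to the identity $\sigma_p^2-\sigma_b^2 = \sigma_r^2 - \sigma_d^2-\sigma_e^2 = \sigma_a^2$; adding the elementary bound $\sigma_q(f) - \sigma_c(f_c) \le \sqrt{\sigma_d^2(f_d)+\sigma_e^2(f_e)}$ then telescopes to the claim. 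Your argument is more transparent about where the hypothesis enters and avoids the quartic expansion, at the cost of the clever conjugate manipulation; the paper's is mechanical but leaves more arithmetic for the reader to verify. Your symmetry reduction (WLOG $\sigma_p(f) \ge \sigma_q(f)$) is the same device the paper invokes when it dismisses the case $\sigma_q(f) \ge \sigma_r(f)$ as ``similar.'' The only caveat is the degenerate situation $\sigma_a(f_a) = \sigma_d(f_d) = \sigma_e(f_e) = 0$, where your rationalized denominators vanish; there the claimed inequality reduces to $\sigma_b(f_b)+\sigma_c(f_c) \le \sigma_b(f_b)+\sigma_c(f_c)$, so nothing is lost, but a one-line remark would make the proof airtight.
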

\begin{proof}
To simplify notation and since the flow $f$ does not play a role in the proof, we are going to suppress the dependence on $f$ of the standard deviation functions.
The inequality we wish to prove is equivalent to
$$
\sqrt{\std_a^2+\std_b^2}+ \sqrt{\std_c^2+\std_d^2} \le \std_b+\std_c+\sqrt{\std_a^2+\std_e^2+\std_d^2}\,.$$
Squaring and rearranging terms, it is also equivalent to
$$2\sqrt{(\std_a^2+\std_b^2)(\std_c^2+\std_d^2)}\le 2\std_b\std_c+\std_e^2+2(\std_b+\std_c)\sqrt{\std_a^2+\std_e^2+\std_d^2}.\,$$
Finally, squaring once more, we get
\begin{multline*}
\std_a^2\std_d^2\le \frac{\std_e^4}{4}+\std_b\std_c\std_e^2+\std_a^2\std_b^2+\std_c^2\std_d^2+2\std_b\std_c(\std_a^2+\std_e^2+\std_d^2)+\nonumber\\
\quad \std_e^2(\std_b^2+\std_c^2)+(2\std_b\std_c+\std_e^2)(\std_b+\std_c)\sqrt{\std_a^2+\std_e^2+\std_d^2}\,.
\end{multline*}
If $\std_\braessp\ge\std_\braessr$, the last inequality holds because
$\std_a^2+\std_b^2\ge \std_a^2+\std_e^2+\std_d^2$, from where
$\std_b\ge \std_d$. The case of $\std_\braessq\ge\std_\braessr$ is similar.
\end{proof}

We now prove a variant of Lemma~\ref{lemma1} for the mean-stdev cost on Braess
graphs.

\begin{lemma}\label{lemma1:braess-stdev}
Consider a Braess network, general latencies and general standard deviation
functions that satisfy that the coefficient of variation at equilibrium is
bounded by $\kappastdev$. Letting $\altp$ be an alternating path,
the social cost $\CC(\xx)$ of a RAWE $\xx$ is upper bounded by
$$(1+\gamma\kappastdev) \sum_{e\in A\cap \altp}\ell_e(\xx_e) -  \sum_{e\in B\cap \altp}\ell_e(\xx_e)\,.$$
\end{lemma}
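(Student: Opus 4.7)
The plan is to carry out a case analysis over the alternating $s$--$t$ paths available in the Braess network. Enumerating possibilities, only four alternating paths exist: the three ordinary paths $\braessp=(a,b)$, $\braessq=(c,d)$, $\braessr=(a,e,d)$ (all composed of forward edges), and one alternating path $\altp=(c,e^-,b)$ with a single backward edge, which requires $\{b,c\}\subseteq A$ and $e\in B$. In the first three cases the target bound reduces to $\CC(\xx)\le (1+\gamma\kappastdev)\,\ell_\altp(\xx)$, which is exactly Corollary~\ref{cor-UBcostRAWE-stdev}. Only the nontrivial case $\altp=(c,e^-,b)$ requires work.

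For that case, the condition $e\in B$ forces $\xx_e>0$ and hence $\xx_\braessr>0$, so $\braessr$ is flow-carrying in the RAWE. The equilibrium conditions $\pathcost_\braessr(\xx)\le \pathcost_\braessp(\xx)$ and $\pathcost_\braessr(\xx)\le \pathcost_\braessq(\xx)$, after cancelling the latencies of edges shared by the two paths compared in each inequality, reduce to
\[
\ell_e+\ell_d \;\le\; \ell_b + \gamma(\sigma_\braessp-\sigma_\braessr)
\quad\text{and}\quad
\ell_a+\ell_e \;\le\; \ell_c + \gamma(\sigma_\braessq-\sigma_\braessr),
\]
where $\ell_\cdot$ abbreviates $\ell_\cdot(\xx_\cdot)$ for edges and $\sigma_\cdot$ abbreviates $\sigma_\cdot(\xx)$ for paths. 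Summing these two and combining with Lemma~\ref{lemma-UBcostRAWE-stdev} applied to the flow-carrying path $\braessr$, namely $\CC(\xx)\le \ell_a+\ell_e+\ell_d+\gamma\sigma_\braessr$, yields the master inequality
\[
\CC(\xx)+\ell_e \;\le\; \ell_b+\ell_c + \gamma(\sigma_\braessp+\sigma_\braessq-\sigma_\braessr).
\]

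To finish, I split on whether the hypothesis of Lemma~\ref{lemmaBraess} is satisfied. If $\sigma_\braessr \le \max(\sigma_\braessp,\sigma_\braessq)$, Lemma~\ref{lemmaBraess} gives $\sigma_\braessp+\sigma_\braessq-\sigma_\braessr \le \std_b+\std_c$, and the stdev-to-mean bound on the edges $b,c$ further yields $\std_b+\std_c \le \kappastdev(\ell_b+\ell_c)$; plugging these into the master inequality closes this subcase. Otherwise $\sigma_\braessr > \max(\sigma_\braessp,\sigma_\braessq)$, and the two equilibrium inequalities above have strictly negative right-hand-side corrections, strengthening to $\ell_e+\ell_d<\ell_b$ and $\ell_a+\ell_e<\ell_c$; hence $\ell_\braessr(\xx)=\ell_a+\ell_e+\ell_d<\ell_b+\ell_c-\ell_e$. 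Combining the $\ell^2$--$\ell^1$ norm comparison $\sigma_\braessr \le \std_a+\std_e+\std_d$ with the stdev-to-mean bound at each edge of $\braessr$ produces $\sigma_\braessr \le \kappastdev\,\ell_\braessr(\xx) < \kappastdev(\ell_b+\ell_c)$, and substituting both estimates into $\CC(\xx)\le \ell_\braessr(\xx)+\gamma\sigma_\braessr$ closes the proof.

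The main obstacle is precisely that Lemma~\ref{lemmaBraess} is only established under the hypothesis $\sigma_\braessr \le \max(\sigma_\braessp,\sigma_\braessq)$, which nothing in the equilibrium setup a priori guarantees. The rescue is that in the complementary regime the equilibrium conditions themselves become restrictive on the latencies (forcing $\ell_\braessr$ to strictly underprice $\ell_b+\ell_c-\ell_e$), which is exactly what permits a direct estimate bypassing Lemma~\ref{lemmaBraess}.
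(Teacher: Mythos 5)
Your proposal is correct and follows essentially the same route as the paper's proof: the same enumeration of the four possible alternating paths, the same reduction of the nontrivial case $(c,e^-,b)$ via the equilibrium comparisons of $\braessr$ against $\braessp$ and $\braessq$, the same split on whether $\std_\braessr(\xx)\le\max(\std_\braessp(\xx),\std_\braessq(\xx))$ so that Lemma~\ref{lemmaBraess} applies, and the same direct latency estimate $\ell_\braessr(\xx)\le\ell_b+\ell_c-\ell_e$ in the complementary regime. The only cosmetic difference is that you package the chain of inequalities into a single ``master inequality'' before branching, whereas the paper branches first; the content is identical.
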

\begin{proof}
If $\altp\subseteq A$, the result follows from Theorem~\ref{thm:stdev}. This
can happen when $\altp\in\{\braessp,\braessq,\braessr\}$. Otherwise $\altp$
is the alternating path that consists of edges $c$, $e^-$, and $b$. In that
case, $e\in B$.  Note that $\zz_e=\xx_e=0$ cannot hold for $e$ because it is part
of the alternating path, so we must have that $\xx_e>0$ and consequently that
$\xx_\braessr>0$. We must prove that
$$\CC(\xx)\le (1+\gamma\kappastdev) (\ell_c(\xx_c)+\ell_b(\xx_b))-\ell_e(\xx_e)\,.$$
Since $\xx_\braessr > 0$, we have that
\begin{eqnarray}
&& \ell_\braessr(\xx) + \gamma \std_\braessr(\xx)\leq \ell_\braessp(\xx) + \gamma \std_\braessp(\xx)\nonumber \\
\Leftrightarrow && \ell_a(\xx) + \ell_e(\xx) + \ell_d(\xx) \leq \ell_a(\xx) + \ell_b(\xx) + \gamma (\std_\braessp(\xx)-\std_\braessr(\xx)) \nonumber \\
\Leftrightarrow &&  \ell_d(\xx) \leq \ell_b(\xx) -\ell_e(\xx) + \gamma (\std_\braessp(\xx)-\std_\braessr(\xx)). \label{eq:d}
\end{eqnarray}
Similarly,
\begin{eqnarray}
&& \ell_\braessr(\xx) + \gamma \std_\braessr(\xx)\leq \ell_\braessq(\xx) + \gamma \std_\braessq(\xx)\nonumber \\
\Leftrightarrow && \ell_a(\xx) + \ell_e(\xx) + \ell_d(\xx) \leq \ell_c(\xx) + \ell_d(\xx) + \gamma (\std_\braessq(\xx)-\std_\braessr(\xx)) \nonumber \\
\Leftrightarrow &&  \ell_a(\xx) \leq \ell_c(\xx) -\ell_e(\xx) + \gamma (\std_\braessq(\xx)-\std_\braessr(\xx)). \label{eq:a}
\end{eqnarray}

If $\std_\braessr(\xx)\le \max\{\std_\braessp(\xx),\std_\braessq(\xx)\}$, then we can apply Lemma~\ref{lemmaBraess}:
\begin{align*}
\CC(\xx)
\leq & \ell_\braessr(\xx) + \gamma\std_\braessr(\xx)\qquad\text{ by Lemma~\ref{lemma-UBcostRAWE-stdev}} \\
= & \ell_a(\xx) + \ell_e(\xx) + \ell_d(\xx) +\gamma\std_\braessr(\xx)\\
\leq & \ell_c(\xx) -\ell_e(\xx)+ \gamma(\std_\braessq(\xx)-\std_\braessr(\xx))+ \ell_e(\xx) + \ell_b(\xx) -\ell_e(\xx)+ \gamma(\std_\braessp(\xx)-\std_\braessr(\xx)) +\gamma\std_\braessr(\xx)\qquad\text{ by \eqref{eq:d}-\eqref{eq:a}}\\
= & \ell_c(\xx) + \ell_b(\xx) -\ell_e(\xx)+\gamma(\std_\braessp(\xx)+ \std_\braessq(\xx)-\std_\braessr(\xx))\\
\le & \ell_c(\xx) + \ell_b(\xx) -\ell_e(\xx)+\gamma(\std_b(\xx_b)+\std_c(\xx_c)) \qquad\text{ by Lemma~\ref{lemmaBraess}}\\
\le & (1+\gamma\kappastdev) (\ell_c(\xx_c)+\ell_b(\xx_b))-\ell_e(\xx_e),
\end{align*}
where the last inequality follows from the coefficient of variability constraint  $\std(\xx) \leq \kappastdev \ell(\xx)$ for the corresponding edges.

Otherwise, $\std_\braessr(\xx)> \std_\braessp(\xx)$ and
$\std_\braessr(\xx)> \std_\braessq(\xx)$.
Then, inequalities~\eqref{eq:d}
and \eqref{eq:a} imply that
\begin{align*}
\ell_b(\xx)&\geq \ell_d(\xx) +\ell_e(\xx) + \gamma (\std_\braessr(\xx)-\std_\braessp(\xx)) \geq \ell_d(\xx) +\ell_e(\xx)\text{ and}\\
\ell_c(\xx)&\geq \ell_a(\xx) +\ell_e(\xx) + \gamma (\std_\braessr(\xx)-\std_\braessq(\xx)) \geq \ell_a(\xx) +\ell_e(\xx).
\end{align*}
Summing the above two inequalities, we have:
$
\ell_b(\xx)+\ell_c(\xx)\geq \ell_d(\xx) +\ell_e(\xx) + \ell_a(\xx) +\ell_e(\xx) = \ell_\braessr(\xx)+\ell_e(\xx)
$.
Therefore,
\begin{align*}
\CC(\xx) &\leq (1+\gamma\kappastdev)\ell_\braessr(\xx) \qquad\text{ by Corollary~\ref{cor-UBcostRAWE-stdev}} \\
&\leq (1+\gamma\kappastdev)\left(\ell_b(\xx)+\ell_c(\xx)-\ell_e(\xx)\right) \qquad\text{ by inequality above}\\
&\leq (1+\gamma\kappastdev)\left(\ell_b(x)+\ell_c(\xx)\right)-\ell_e(\xx).
\end{align*}
This completes the proof of this lemma.
\end{proof}

Using Lemma~\ref{lemma1:braess-stdev} in place of Lemma~\ref{lemma1}, we can
apply the proof of Theorem~\ref{thm:general} with the alternating path
$\altp$. If $\altp$ is a path, the PRA is bounded by $1 + \gamma\kappastdev$.
Otherwise, $\altp$ consists of edges $b,e,c$ where $\{b,c\}\subseteq A$ and
$e\in B$ and the PRA is bounded by $1 + \gamma\kappastdev \lceil (n-1)/2
\rceil  = 1 + 2\gamma\kappastdev$.

The extension to general networks seems much harder than in the case where the cost is the mean-var objective because not all flow decompositions are valid.

\subsection{Series-Parallel Networks with Single OD and General Latencies}

In this section, we consider series-parallel (SP) networks and provide an
analogous result to Corollary~\ref{cor:sp} for the mean-stdev case, 
namely that the PRA is $(1+\gamma\kappastdev)$. The
extension is straightforward with the tools we have established since SP
networks always possess an alternating path that is a subset of $A$ (see proof
of Corollary~\ref{cor:sp}) and hence we can apply Theorem~\ref{thm:stdev} to
get the following result.

\begin{corollary}\label{cor:spStdev}
Consider a SP graph with a single source-sink pair, general latencies
and general standard deviations with coefficients of variation bounded by
$\kappastdev$. Then, the PRA is bounded by $1+\gamma \kappastdev$.
\end{corollary}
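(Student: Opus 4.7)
The plan is to piece together two results already at our disposal. First, by the remark preceding Theorem~\ref{thm:stdev}, an alternating path exists in any instance regardless of whether risk is measured by variance or standard deviation, because the proof of Lemma~\ref{lemma0} only uses flow conservation together with the definitions of $A$ and $B$ in terms of the two equilibrium flows $\xx$ (now the mean-stdev RAWE) and $\zz$ (RNWE). So we may fix such an alternating path $\altp$ using the mean-stdev versions of $A$ and $B$.

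Next, I would port over the topological argument in the proof of Corollary~\ref{cor:sp}. That argument is purely structural: it shows that in a series-parallel graph, any alternating path containing a reverse edge $e^-$ with $e=(i,j)$ must either return to $j$ or close a loop, for otherwise the subgraph would embed a Braess minor, contradicting series-parallelism. Taking a cardinality-minimal alternating path with a backward edge, this yields a strictly shorter alternating path that bypasses $e^-$, a contradiction. Hence there exists an alternating path $\altp \subseteq A$ consisting entirely of forward edges. This step carries over verbatim because it depends only on the topology of $G$ and on the partition $(A,B)$, not on how $A$ and $B$ were defined in terms of the cost model.

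Finally, with a forward-only alternating path $\altp \subseteq A$ in hand, the hypothesis of Theorem~\ref{thm:stdev} is satisfied, so $\CC(\xx) \le (1+\gamma\kappastdev)\,\CC(\zz)$, giving the claimed bound on the PRA.

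There is no real obstacle here: the heavy lifting was done in Theorem~\ref{thm:stdev} (which handles the nonlinearity of the mean-stdev objective via Corollary~\ref{cor-UBcostRAWE-stdev} whenever the alternating path has no backward edges) and in the topological argument in Corollary~\ref{cor:sp}. The point of this corollary is just to observe that these two ingredients compose cleanly, so the mean-stdev PRA on SP graphs matches the mean-var bound and remains independent of the network's size and detailed topology.
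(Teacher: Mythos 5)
Your argument is correct and is exactly the paper's own proof: the paper likewise invokes the existence of an alternating path, the series-parallel structural argument from Corollary~\ref{cor:sp} to obtain a forward-only alternating path contained in $A$, and then applies Theorem~\ref{thm:stdev}. (The paper additionally offers a second, independent proof via the shortest-path maximization property of Theorem~\ref{thm:SP}, but that is presented only as a result of independent interest.)
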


The tightness of this bound follows after adapting Example~\ref{ex:pigou} to
the case of mean-stdev costs.

In the rest of this section, we offer an alternative proof that is of independent interest.
This proof sheds light on the inefficiency of Wardrop equilibria, which
has been studied at length in the last 15 years, as we discuss in the
introduction. Our proof establishes that in a SP network the risk-neutral equilibrium $\zz$
(equivalently, a standard Wardrop equilibrium) maximizes the shortest-path length among all feasible flows. This implies that at a risk-neutral equilibrium travelers select paths that are longer in
expectation than the shortest path achieved under any other flow.

The following result provides an upper bound to the PRA proportional on the
worst-case ratio between the shortest expected path latency under a RAWE and that
under a RNWE. We remark that this bound also holds in the mean-var case.

\begin{theorem}\label{ub:rho}
Consider an instance with a single source-sink pair, general latencies
and general standard deviations with coefficients of variation bounded by
$\kappastdev$. Then, the PRA is bounded by $(1+\gamma \kappastdev )\rho$, where 
$\rho:= \min_\route \ell_\route(\xx)/ \min_q \ell_q(\zz)$ for a RAWE $\xx$ and a RNWE $\zz$.
\end{theorem}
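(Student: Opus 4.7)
The plan is to combine two facts that are already in hand: Corollary~\ref{cor-UBcostRAWE-stdev}, which upper bounds $\CC(\xx)$ by $(1+\gamma\kappastdev)$ times the expected latency of an \emph{arbitrary} path evaluated at $\xx$, and the defining property of a Wardrop equilibrium, which forces $\CC(\zz)$ to coincide (up to the demand factor) with the shortest expected-latency path at $\zz$.

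First I would apply Corollary~\ref{cor-UBcostRAWE-stdev} with the particular choice $\route^{*} := \arg\min_{\route\in\paths} \ell_\route(\xx)$, that is, the shortest expected-latency path at the risk-averse equilibrium. Since the corollary holds for every path (carrying flow or not), this choice is admissible and yields $\CC(\xx) \leq (1+\gamma\kappastdev)\, \min_{\route\in\paths}\ell_\route(\xx)$, absorbing the demand factor along the lines of Lemma~\ref{lemma-UBcostRAWE-stdev}. Second, I would invoke the risk-neutral Wardrop equilibrium characterization: all flow-carrying paths at $\zz$ share a common latency equal to $\min_{q\in\paths}\ell_q(\zz)$, so $\CC(\zz) = \sum_{q}\zz_q \ell_q(\zz) = d\cdot \min_{q\in\paths}\ell_q(\zz)$, where $d$ is the total demand.

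Dividing the two displays and canceling the common demand factor, the ratio $\CC(\xx)/\CC(\zz)$ is bounded by $(1+\gamma\kappastdev)$ multiplied by $\min_\route\ell_\route(\xx)/\min_q\ell_q(\zz)$, which is $(1+\gamma\kappastdev)\rho$ by definition of $\rho$. Taking the supremum over instances in the family yields the stated PRA bound.

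No substantive obstacle appears in this argument: it is essentially a packaging result that cleanly separates the ``worst-case multiplicative blow-up from risk aversion per edge'' (captured by $1+\gamma\kappastdev$) from a flow-level comparison of shortest expected paths across the two equilibria (captured by $\rho$). The real work is pushed to bounding $\rho$ itself, which is where the structural properties of specific graph families enter---in particular, the forthcoming result that on series-parallel networks the RNWE maximizes the shortest-path length among all feasible flows, giving $\rho \leq 1$ and hence recovering Corollary~\ref{cor:spStdev}.
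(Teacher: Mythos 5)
Your proposal is correct and follows essentially the same route as the paper: apply Corollary~\ref{cor-UBcostRAWE-stdev} to the path minimizing $\ell_\route(\xx)$ and divide by $\CC(\zz)=\min_q \ell_q(\zz)$ (up to the demand normalization), which is exactly the paper's two-line argument. Your additional remarks on the role of $\rho$ and the series-parallel case match the paper's subsequent development.
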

\begin{proof}
Using Corollary~\ref{cor-UBcostRAWE-stdev}, we have that $C(\xx)/C(\zz)\le
(1+\gamma \kappastdev ) \ell_\route(\xx)/ \min_q \ell_q(\zz)$ for an arbitrary path
$\route$. We get the result by minimizing the expected latency over paths $\route\in
\paths$.
\end{proof}
%


The alternative proof consists in showing that $\rho\le 1$ for SP networks, or
$\min_\route \ell_\route(\xx)\le \min_\route \ell_\route(\zz)$. To simplify
notation, we refer to the shortest path with respect to expected latencies
corresponding to a feasible flow $f$ by $S(f):= \min_{\route\in\paths} \ell_\route(f)$.
To get the result we now prove that $S(f)\le S(\zz)$ for any feasible flow $f$.

\begin{theorem}\label{thm:SP}
Consider a RNWE $\zz$ of a SP network with general latencies and one source-sink pair. Then,
$\zz$ maximizes $S(f)$ among all feasible flows $f$.
\end{theorem}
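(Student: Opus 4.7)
The plan is to argue by structural induction on the recursive series-parallel (SP) decomposition of $G$, simultaneously proving a companion monotonicity fact needed for the parallel step: for any SP graph and any demand $d$, the equilibrium shortest-path length $S(\zz^{(d)})$ of a RNWE $\zz^{(d)}$ is non-decreasing in $d$. The base case, where $G$ is a single edge $e$, is immediate from the non-decreasingness of $\ell_e$.

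In the series case $G = G_1 \cdot G_2$, every $s$-$t$ path decomposes as $\route = \route_1 \cdot \route_2$ with $\route_i \subseteq G_i$, so $\ell_\route(f) = \ell_{\route_1}(f_1) + \ell_{\route_2}(f_2)$ where $f_i$ denotes the restriction of $f$ to $G_i$. Consequently $S^G(f) = S^{G_1}(f_1) + S^{G_2}(f_2)$. Since the restriction of a RNWE on $G$ to each $G_i$ is itself a RNWE on $G_i$ at demand $d$ (a standard Wardrop observation), both conclusions follow by summing the inductive bounds. In the parallel case $G = G_1 \| G_2$ the equilibrium splits the demand as $(d_1^*, d_2^*)$ while $f$ splits it as $(d_1, d_2)$, and $S^G(f) = \min\{S^{G_1}(f_1), S^{G_2}(f_2)\}$. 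Without loss of generality $d_1 \leq d_1^*$. Applying the IH on $G_1$ at demand $d_1$ gives $S^{G_1}(f_1) \leq S^{G_1}(\widetilde{\zz}_1)$ for any RNWE $\widetilde{\zz}_1$ of $G_1$ at demand $d_1$, while the companion monotonicity on $G_1$ gives $S^{G_1}(\widetilde{\zz}_1) \leq S^{G_1}(\zz|_{G_1})$. The Wardrop condition at $\zz$ forces $S^{G_1}(\zz|_{G_1}) = S^G(\zz)$ whenever $d_1^* > 0$ (the boundary case $d_1^* = 0$ forces $d_1 = 0$ and is handled analogously via $G_2$). Chaining these inequalities proves $S^G(f) \leq S^{G_1}(f_1) \leq S^G(\zz)$.

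The main obstacle is the companion monotonicity in the parallel step: one must rule out that increasing the aggregate demand strictly decreases $S^G(\zz^{(d)})$. A case analysis that distinguishes whether each $d_i^*$ is interior or zero and combines the inductive monotonicity on each $G_i$ with the Wardrop condition at $\zz$ suffices to derive a contradiction from any strict drop. A cleaner and more general alternative is to invoke Beckmann's potential $\Phi(f) = \sum_e \int_0^{f_e} \ell_e(u)\,du$: $\Phi$ is convex because each $\ell_e$ is non-decreasing, and the feasible polytope is affine in $d$, so the optimal value $\Phi^*(d) = \Phi(\zz^{(d)})$ is convex in $d$. By KKT, the Lagrange multiplier on the demand constraint coincides with $S^G(\zz^{(d)})$ and equals the derivative of $\Phi^*(d)$, hence is monotone non-decreasing in $d$.
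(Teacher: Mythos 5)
Your proof is correct and follows essentially the same inductive structure as the paper's: induction on the series-parallel decomposition, with the series case summing componentwise bounds and the parallel case combining the inductive hypothesis on the component receiving no more flow than at equilibrium with monotonicity of the equilibrium shortest-path cost in the demand. The only real difference is that you explicitly justify that monotonicity step (the paper simply asserts that increasing demand cannot reduce the equilibrium shortest path), and your Beckmann-potential argument for it---convexity of the optimal value in $d$ and identification of the demand multiplier with the equilibrium cost---is valid.
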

\begin{proof}
We use induction on the construction steps of the SP network. 
First, let us consider that the last composition in the construction of the
network is series. Because it is a series composition, the restriction of $\zz$
to each component is a RNWE for the component. Considering an arbitrary
feasible flow $f$, the induction implies that the restriction of $f$ to each
component cannot have a longer shortest path than the restriction of $\zz$ in
the same component. We get the desired inequality adding the inequalities
corresponding to each component back together.

Second, let us consider that the last composition in the construction of the
network is parallel and denote the subcomponents by $G_1, G_2, \ldots, G_k$.
Let us denote by $\zz_i$ and $f_i$ the projection of each flow into a
component. There must exist a component $i$ such that $f_i\le \zz_i$ and
$\zz_i>0$ because $\sum \zz_i=\sum f_i=1$. Then, $S(f)\le S(f_i)\le
S(RNWE_{G_i}(f_i)) \le S(RNWE_{G_i}(\zz_i))=S(\zz)$. Here, we have denoted by
$RNWE_{G_i}(d)$ the risk-neutral equilibrium in subgraph $G_i$ for demand $d$.
The first inequality is because $S(f)$ is the minimal shortest path across
components, the second is by the inductive hypothesis applied to the component
$G_i$ with demand $f_i$, the third is because increasing demand from $f_i$ to
$\zz_i$ cannot reduce the shortest path at equilibrium, and the fourth is
because some flow is routed through component $i$ so the shortest path in that
component is equal to the shortest path in the whole graph.
\end{proof}

This argument does not readily extend to non-SP networks since we can sometimes route flow in a worse way
than a Wardrop equilibrium.
As an example, take the instance shown in Figure~\ref{fig:bad} with
$k$ horizontal edges and $d=1$. Although the equilibrium loads all horizontal
paths equally achieving a shortest path length of $1/k$, to maximize the
shortest path one can route all flow along the zigzag path achieving a
shortest path of 1. 
\begin{figure}
\centering\includegraphics[width=2.5in]{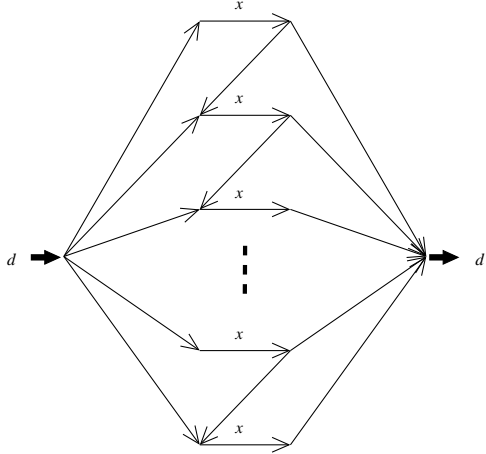}
\caption{A non-SP example where the RNWE does not maximize the shortest path.}\label{fig:bad}
\end{figure}
In any case, note that the example above does not preclude the possibility
that the PRA is bounded for non-SP networks. Although $S(f)$ could be high
for an arbitrary feasible flow $f$, it does not imply that $S(\xx) > S(\zz)$. Indeed
$\xx$ is unlikely to use very bad paths because it is at equilibrium
for a different objective function, whereas the bad flow in the zigzag-path
example is not an equilibrium for any objective.

\section{Conclusion}

%
%
%
%
%

This paper marks a first step in understanding the consequences on the inefficiency of selfish
routing caused by uncertain edge delays and risk-averse players. We have established an upper
bound on the ratio of the cost of the risk-averse equilibrium to that of the
risk-neutral one, for users that aim to minimize the mean-variance of their route in a
general network. We have proved that the bound is tight on series-parallel
and Braess networks.  In addition, we have shown that both tight bounds
extend to the case where users minimize the mean-standard deviation of a route instead
and have elaborated on
the challenges of extending the analysis to general graphs for users with such
risk profiles. Some immediate open questions include: (1) Is our bound for
general graphs tight? (2) Can it be extended to mean-stdev and other risk
objectives?  (3) Can the bounds or analysis be extended to heterogeneous risk
profiles? Another interesting direction is whether risk sometimes helps rather
than hurts the quality of equilibrium and the social welfare.  In particular,
can risk-averse attitudes be leveraged in mechanism design in the place of
tolls to reduce congestion?

\section{Appendix: A proof of Theorem~\ref{thm:SP} via KKT conditions}

Without loss of generality we may consider that the instance has unit demand.
Let us consider the convex optimization problem that maximizes the shortest
path objective among feasible flows. Our approach consists on proving that
the optimal objective value is $S(\zz)$, which by definition of equilibria
equals $C(\zz)$. In general, the optimal objective value cannot be less than
$C(\zz)$ because $\zz$ is a feasible solution to problem below. Although for
general networks the optimum can be strictly larger, we will see that this
cannot happen for SP networks. In the following formulation, we denote the
dual variables within parenthesis.

%

\begin{align*}
\min \qquad&z&\\
&-z- \sum_{a\in p} \ell_a (f_a)\le 0 \qquad &\forall p\in\paths \qquad &(\lambda_p)\\
&-f_p\le 0\qquad &\forall p\in\paths \qquad &(\omega_p)\\
&\sum_{p\in\paths} f_p-1=0 \qquad &\qquad&(\mu)\\
&f_a-\sum_{p:a\in p\in\paths} f_p =0  \qquad &\forall a\in A \qquad &(\nu_a)\\
\end{align*}

The problem is written in normal form to facilitate writing the dual. According
to the first constraint, $-z$ is a lower bound to the length of all paths,
hence correctly capturing the shortest-path objective. The second constraint
imposes non-negativity of the flow, the third specifies the total demand and
the last defines the flows on edges.
The Lagrangian dual is
$$\Lambda(z,f,\lambda,\omega,\nu,\mu):=z- \sum_{p\in\paths } \lambda_p(z+ \sum_{a\in
p} \ell_a (f_a))- \sum_{p\in\paths } \omega_p f_p+ \mu(\sum_{p\in\paths } f_p-1)+ \sum_{a
\in A} \nu_a(f_a-\sum_{p:a\in p\in\paths} f_p).$$
The stationarity conditions together with dual feasibility are given by the
domain
\begin{align*}
&\sum_{a\in p} \lambda_a \ell'_a(f_a) \le \mu \qquad &\forall p\in\paths \qquad &(f_p)\\
&\sum_{p\in\paths} \lambda_p=1 \qquad &\qquad&(z)\\
&\lambda_a=\sum_{p:a\in p\in\paths} \lambda_p   \qquad &\forall a\in A \qquad &\\
&\lambda_p\ge 0, \mu \text{ free}\\
\end{align*}
and the complementary slackness conditions are given by
\begin{align*}
&\lambda_p(\sum_{a\in p} \ell_a (f_a)-z)=0&\forall p\in\paths \qquad &\\
&f_p(\mu-\sum_{a\in p} \lambda_a \ell'_a(f_a))=0&\forall p\in\paths\,. \qquad &\\
\end{align*}

We find feasible values for $z$, $f$, $\lambda$, and $\mu$ that satisfy
primal and dual feasibility, the stationary and the complementary slackness
conditions at the same time. For that purpose, we take $f=\zz$, which is
primal feasible. For the dual, we consider the function $\zz(d)$ that maps a
total demand $d$ to the RNWE edge-flow, which is unique when cost functions
are strictly increasing. Using this, we let $\lambda_p$ be the marginal
increase in flow along path $p$ when we increase the demand by an
infinitesimal. Formally, we let $\lambda_a:=\partial \zz_a(1)/\partial d$ and
consider an arbitrary decomposition of the edge flow $(\lambda_a)_{a\in A}$ to
get $\lambda_p$ for all $p\in\paths$. This definition automatically gives us
dual feasibility because $\sum_{p\in\paths} \lambda_p=1$. This happens
because the additional flow has to be assigned to some path.

What is left is proving that $(f,\lambda)$ is an optimal primal-dual pair by
checking that they satisfy the complementary-slackness conditions. The first
condition holds because a path that is not shortest under a RNWE will not
receive flow if the demand increases by an infinitesimal.

For the second condition, we need to prove that if the RNWE sends flow along
a path $p$ then $\sum_{a\in p} \lambda_a \ell'_a(f_a)$ must be maximal among
paths. Let $q$ be a path such that the expression in the previous sentence
equals $\mu$ and that additionally has the lowest number of edges with
$\lambda_a=0$. If $\lambda_a=0$ for all $a\in q$, then $\mu=0$ and we are
done. Hence, there is at least one $a\in q$ such that $\lambda_a>0$. If
there are other edges $i\in \tilde q\subset q$ with $\lambda_i=0$, we consider
a path $r$ such that $\lambda_i>0$ for all $i\in r$ that maximizes the
overlap with $q$. (Obviously, $r\cap\tilde q=\emptyset$.) Using that the
network is series-parallel, we notice that $r$ shortcuts all the subpaths in
$\tilde q$ and hence $\sum_{a\in r} \lambda_a \ell'_a(f_a)=\mu$ because the
terms corresponding to $\tilde q$ are zero.

The argument in the previous paragraph allows us to assume that $\lambda_r>0$
by taking the correct path-decomposition of $(\lambda_a)_{a \in A}$. Since
increasing the total demand increases the flow along $r$, the cost at
equilibrium along $r$ must be shortest (even though $f_r$ might be zero).
Coming back to the second condition, for any path $p\in\paths$ such that
$f_p>0$, the cost at equilibrium along $p$ must also be shortest and hence
the additional load along both $r$ and $p$ when increasing demand must be
equal to $\mu$.

\bibliographystyle{plainnat}
\bibliography{../../rwe,../../../soue,../../../routeguid,../../../telecom,../../eddie-thesis,../../risk,../../risk2,../../../robust} 



\end{document}